\DeclareFontFamily{OT1}{pzc}{}
\DeclareFontShape{OT1}{pzc}{m}{it}{<-> s * [1.10] pzcmi7t}{}
\DeclareMathAlphabet{\mathpzc}{OT1}{pzc}{m}{it}
\newcommand{\sito}{{\sc sito}}
\newcommand{\sitos}{{\sc sito\rm s}}
\newcommand{\cito}{{\sc cito}}
\newcommand{\citos}{{\sc cito\rm s}}
\newcommand{\ver}[1]{\widehat{#1}}
\newcommand{\verr}[1]{\ver{r}_{#1}}
\newcommand{\verrp}[1]{\ver{r}_{#1}\,'}
\newcommand{\verp}[1]{\ver{p}_{#1}}
\newcommand{\verq}[1]{\ver{q}_{#1}}
\newcommand{\verqp}[1]{\ver{q}_{#1}\,'}
\newcommand{\veps}[1]{\ver{\varepsilon}_{#1}}
\newcommand{\eps}{\varepsilon}
\newcommand{\CG}[6]{\langle #1,#2;#3,#4 | #5,#6\rangle}
\renewcommand{\O}{{\Omega}}
\renewcommand{\H}{{\cal H}}
\newcommand{\R}{{\cal R}}
\renewcommand{\P}{{\cal P}}
\newcommand{\vj}{\vec{\jmath}}
\newcommand{\mmp}[3]{m^{#1 1}_{#2#3}}
\theoremstyle{definition}
\newtheorem*{thm*}{Theorem}
\newtheorem{lem}{Lemma}[section]
\newtheorem{crl}{Corollary}[section]
\newtheorem*{crl*}{Corollary}
\title{The Wigner-Eckart Theorem for Reducible Symmetric Cartesian
  Tensor Operators}
\author{Antonio O.\ Bouzas \thanks{E-mail:
    abouzas@mda.cinvestav.mx}\\\small Departamento de F\'{\i}sica
  Aplicada, CINVESTAV-IPN \\\small Carretera Antigua a Progreso Km.\
  6, Apdo.\ Postal 73 ``Cordemex''\\\small M\'erida 97310, Yucat\'an,
  M\'exico}
\date{\today}
\begin{document}

\maketitle

\begin{abstract}
  \noindent We explicitly establish a unitary correspondence between
  spherical irreducible tensor operators and cartesian tensor
  operators of any rank.  That unitary relation is implemented by
  means of a basis of integer-spin wave functions that constitute
  simultaneously a basis of the spaces of cartesian and spherical
  irreducible tensors.  As a consequence, we extend the Wigner--Eckart
  theorem to cartesian irreducible tensor operators of any rank, and
  to totally symmetric reducible ones. We also discuss the tensorial
  structure of several standard spherical irreducible tensors such as
  ordinary, bipolar and tensor spherical harmonics, spin-polarization
  operators and multipole operators. As an application, we obtain an
  explicit expression for the derivatives of any order of spherical
  harmonics in terms of tensor spherical harmonics.\\[3pt]
  Keywords: spherical tensor, cartesian tensor,
  spherical harmonic, angular momentum, Wigner-Eckart
\end{abstract}

\newpage
\tableofcontents
\newpage

\section{Introduction}
\label{sec:intro}

The Wigner--Eckart theorem is one of the fundamental results in
quantum angular-momentum theory. As is well known, it states that the
dependence on magnetic quantum numbers of the matrix elements of
spherical irreducible tensor operators (henceforth \sitos) between
angular--momentum eigenstates, is factorizable into a Clebsch--Gordan
(henceforth CG) coefficient. This leads to a drastic simplification of
the calculation of tensor-operator matrix elements, such as those
appearing in perturbative computations in molecular, atomic and
nuclear systems (and more generally in rotationally invariant
many-body problems), and a vast array of other quantum-mechanical
systems such as, \emph{e.g.}, the theory of anisotropic liquids
\cite{kro15}.  By now standard textbook material, the Wigner--Eckart
theorem was first formulated by Eckart \cite{eck30} for rank-1 \sitos\
and generalized by Wigner \cite{wig59} to \sitos\ of any rank.
Wigner's treatment is based on group--theoretic methods involving
finite rotation operators.  The definition of \sitos\ and the proof of
the Wigner--Eckart theorem based on angular-momentum commutation
relations (i.e., on infinitesimal rotation operators), as usually
found in textbooks \cite{lau77,mes61,coh77,gal90}, is due to Racah
\cite{rac42}.\footnote{A more detailed historical account is given in
  \cite{bie09}.}

In this paper we consider the relationship between \sitos\ and
cartesian irreducible tensor operators (henceforth \citos).  We
explicitly establish a unitary correspondence between them valid for
any rank.  The precise relation between spherical and cartesian
tensors allows us to apply the technical machinery of tensor algebra
to spherical tensors and, conversely, the techniques of quantum
angular--momentum theory to cartesian tensors.  That interplay is, in
fact, the main subject of this paper.  In this respect, our results
are a significant extension of the classic work of Zemach \cite{zem65}
and complementary to more recent results (\emph{e.g.}, \cite{nor82}
and references cited there).

Another important feature of our approach is that the coefficients of
the unitary transformation relating \sitos\ and \citos\ have a
well-defined physical meaning: they are an orthonormal, complete set
of standard spin wave--functions, satisfying the eigenvalue equations,
phase conventions and complex--conjugation properties expected of
angular--momentum eigenstates and eigenfunctions which, furthermore,
form a Clebsch--Gordan series of angular--momentum states. They
transform under rotations either as cartesian or as spherical tensors,
which explains their role in relating both types of tensors.  By means
of the relation between spherical and cartesian irreducible tensor
operators we extend the Wigner--Eckart theorem to \citos\ of any rank.
Remarkably, such an extension has not been considered before in the
literature.  We discuss also the cartesian tensorial form of several
commonly occurring \sitos\ (ordinary spherical harmonics, as well as
bipolar and tensor ones, among others).  Those cartesian expressions
provide a viewpoint complementary to the usual analytical one, by
making the tensorial structure of \sitos\ completely explicit, which
makes possible to obtain relations that would otherwise be more
difficult to find.  Furthermore, writing \sitos\ in tensorial form is
of interest in the context of relativistic theories, for example in
connection with covariant partial--wave expansions.  The converse case
is also true, since by mapping cartesian tensors into spherical ones
their angular--momentum properties become apparent.

Reducible cartesian tensors occur frequently in physics, so the
evaluation of the angular-momentum matrix elements of reducible tensor
operators is clearly of interest.  We obtain in this paper a further
extension of the Wigner--Eckart theorem to a limited class of
reducible tensor operators of any rank, namely, that of totally
symmetric ones.  This allows us to compute the matrix elements of
tensor powers of the position and of the momentum operators.  As an
application, we obtain an explicit expression for the gradients of
spherical harmonics to all orders in terms of tensor spherical
harmonics.  That result is a generalization of the well-known gradient
formula \cite{bet33,gal90,var88} for first derivatives to derivatives
of any order.

The paper is organized as follows.  In the following section we
introduce the spin operator for cartesian tensors and briefly discuss
our notation and conventions for tensors. In section \ref{sec:spin.wf}
we construct a standard basis of spin wave--functions for any integer
spin and establish their main properties both as angular--momentum
eigenfunctions and as a basis of the space of cartesian irreducible
tensors.  By means of that basis, in section \ref{sec:csitos} we
obtain the unitary relation between \sitos\ and \citos.  In section
\ref{sec:we} we establish the Wigner--Eckart theorem for \citos. In
section \ref{sec:standard} we analyze several \sitos\ commonly used in
the literature, including ordinary, bipolar, and tensor spherical
harmonics, spin--polarization operators, and electric multipole
operators, from the point of view of their relation to \citos.  The
Wigner--Eckart theorem for totally symmetric reducible tensors is
discussed in section \ref{sec:totsym}.  Its application to the
computation of derivatives of spherical harmonics to any order is
worked out in section \ref{sec:dershy}.  In section \ref{sec:parirr}
we discuss partially irreducible cartesian tensors, provide the
extension of the Wigner-Eckart theorem to them, and discuss the magnetic
multipole expansion.  Finally, in section \ref{sec:finrem} we give
some concluding remarks.  Our conventions for angular momentum theory
are detailed in appendix \ref{sec:angmom}.

\section{The spin operator for cartesian tensors.}
\label{sec:cartens}

Throughout this paper, unless otherwise indicated, we consider only
proper rotations represented by proper orthogonal matrices $\R$ with
$\det\R=1$, so we will not need to distinguish between tensors and
pseudo-tensors.  Given a fixed orthogonal coordinate frame, we denote
coordinate versors either by $\ver{x}$, $\ver{y}$, $\ver{z}$ or by
$\ver{e}\,^i$ ($i=1$, 2, 3), with components
$\ver{e}\,^i_j=\delta_{ij}$ so that $\ver{e}^{1,2,3}=\ver{x}$,
$\ver{y}$, $\ver{z}$.  An infinitesimal rotation is of the form
$\R_{ij}(\delta\vec{\theta}\,)=\delta_{ij}+\delta\R_{ij}$, with
$\delta\R_{ij}=\eps_{ikj} \delta\theta_k$ specified by the
infinitesimal parameter $\delta\vec{\theta}$ describing a rotation by
an infinitesimal angle $|\delta\vec\theta|$ about the axis
$\ver{\theta}=\delta\vec\theta/|\delta\vec\theta|$.  By definition, a
rank-$n$ cartesian tensor transforms under an infinitesimal rotation
$\R(\delta\vec{\theta})$ as
\begin{equation}
\label{eq:tensor.draft}
\delta A_{i_1\ldots i_n}= \delta\theta_j (\eps_{i_1jk} A_{ki_2\ldots i_n}+
\eps_{i_2jk} A_{i_1ki_3\ldots i_n}+  \ldots +
\eps_{i_njk} A_{i_1\ldots i_{n-1}k}).  
\end{equation}
It is convenient to state the definition (\ref{eq:tensor.draft}) in a 
more compact way by gathering the coefficients on its
right-hand side in a linear operator
$\vec{S}_{(n)}$  
\begin{equation}
  \label{eq:c.tensors}  
\delta A_{i_1\ldots i_n}= -i\delta\theta_j (S_{(n)j})_{i_1\ldots
  i_n;k_1\ldots k_n}  A_{k_1\ldots k_n},
\qquad
  \left(S_{(n)j}\right)_{i_1\ldots i_n;k_1\ldots k_n} = 
\sum_{t=1}^n i \veps{i_tjk_t}\prod_{\substack{p=1\\p\neq t}}^n
\delta_{i_pk_p}. 
\end{equation}
In particular $(S_{(1)j})_{i;k}=i\eps_{ijk}$ and $\R(\vec{\theta}) =
\exp(-i\vec{\theta}\cdot\vec{S}_{(1)})$ .  From (\ref{eq:c.tensors})
it follows that the linear operators $\vec{S}_{(n)}$ are hermitian
and, as expected, satisfy angular-momentum commutation relations
\begin{equation}
  \label{eq:spin.commutator}
[S_{(n)k}, S_{(n)h}]= i \eps_{khr}S_{(n)r}.  
\end{equation}
We will therefore refer to $\vec{S}_{(n)}$ as the ``spin matrices'' in
the space of rank-$n$ cartesian tensors.  Another important
consequence of the definition (\ref{eq:c.tensors}) is the recursion
relation 
\begin{multline}
  \label{eq:s.mat.prop.a}
(S_{(p+q)k})_{i_1\ldots i_{p+q};j_1\ldots j_{p+q}} =
(S_{(p)k})_{i_1\ldots i_p;j_1\ldots j_p} \delta_{i_{p+1}j_{p+1}} 
\ldots \delta_{i_{p+q}j_{p+q}} \\
+ \delta_{i_{1}j_{1}}\ldots \delta_{i_{p}j_{p}} 
(S_{(q)k})_{i_{p+1}\ldots i_{p+q};j_{p+1}\ldots j_{p+q}},
  \end{multline}
of which (\ref{eq:c.tensors}) itself is the solution.  From
(\ref{eq:s.mat.prop.a}) we obtain an equivalent relation for finite
rotation operators,
\begin{equation}
  \label{eq:r.mat.prop}
\left(e^{-i \vec{\theta}\cdot\vec{S}_{(p+q)}}\right)_{i_1\ldots
    i_{p+q};j_1\ldots j_{p+q}} =
\left(e^{-i \vec{\theta}\cdot\vec{S}_{(p)}}\right)_{i_1\ldots
    i_p;j_1\ldots j_p}
\left(e^{-i \vec{\theta}\cdot\vec{S}_{(q)}}\right)_{i_{p+1}\ldots
    i_{p+q};j_{p+1}\ldots j_{p+q}}.
\end{equation}
This equality is consistent with the fact that a rank $n$ complex
cartesian tensor transforms under rotations as the tensor product of
$n$ vectors.  Indeed, the r.h.s.\ of (\ref{eq:r.mat.prop}) is the
rotation matrix in the space of rank-$n$ tensors, and through
iteration of the equality (\ref{eq:r.mat.prop}) we get
\begin{equation}
  \label{eq:finite.rot.tens}
  \left(e^{-i \vec{\theta}\cdot\vec{S}_{(n)}}\right)_{i_1\ldots
    i_n;j_1\ldots j_n} = 
\left(e^{-i \vec{\theta}\cdot\vec{S}_{(1)}}\right)_{i_1;j_1}
\ldots \left(e^{-i
    \vec{\theta}\cdot\vec{S}_{(1)}}\right)_{i_{n};j_{n}} = 
\R(\vec{\theta}\,)_{i_1j_1}\ldots
  \R(\vec{\theta}\,)_{i_nj_n}. 
\end{equation}
A direct proof of (\ref{eq:r.mat.prop}) or (\ref{eq:finite.rot.tens})
can be obtained by differentiating the equality with respect to
$\theta=|\vec{\theta}|$, with $\ver{\theta}$ fixed, to find that both
sides satisfy the same first--order differential equation.
Alternatively, one can expand the exponentials on both sides of
(\ref{eq:r.mat.prop}) or (\ref{eq:finite.rot.tens}) in powers of
$\theta$ and use (\ref{eq:s.mat.prop.a}) to obtain a binomial
expansion for the powers in each term.  Both procedures are
straightforward, though somewhat tedious, so we omit the details for
brevity. We include here for later reference the relation
\begin{equation}
  \label{eq:s.mat.prop.b}
(\vec{S}_{(n)}^2)_{i_1\ldots i_n;h_1\ldots h_n} = 2n \prod_{q=1}^n
\delta_{i_qh_q} +\sum^n_{\substack{p,t=1\\p\neq t}} (\delta_{i_ph_t}
\delta_{i_th_p}- \delta_{i_pi_t} \delta_{h_ph_t})
\prod^n_{\substack{q=1\\p\neq q\neq t}}\delta_{i_qh_q},
\end{equation}
which gives the expression for the matrix of the squared spin operator.

The set of all rank-$n$ complex irreducible (i.e., totally symmetric
and traceless \cite{ham62}) tensors is a $2n+1$-dimensional linear
subspace of the space of complex rank-$n$ tensors.  That subspace is
invariant under $\vec{v}\cdot\vec{S}_{(n)}$ for any vector $\vec{v}$
since, as can be seen from (\ref{eq:c.tensors}),
$v_k(S_{(n)k})_{i_1\ldots i_n;j_1\ldots j_n} A_{j_1\ldots j_n}$ and
$v_k(S_{(n)k})_{i_1\ldots i_n;j_1\ldots j_n} A_{i_1\ldots i_n}$ are
totally symmetric and traceless in their free indices if the tensor
$A_{h_1\ldots h_n}$ is irreducible.  Similarly, $\R_{i_1
  j_1}\ldots\R_{i_n j_n}A_{j_1\ldots j_n}$ is irreducible if
$A_{i_1\ldots i_n}$ is.  Given a tensor $A_{i_1\ldots i_n}$ we define
its associated totally symmetrized tensor as
\begin{equation}
  \label{eq:symmetrized}
  A_{\{i_1\ldots i_n\}} = \sum_\sigma A_{i_{\sigma_1}\ldots i_{\sigma_n}},
\end{equation}
where the sum extends over all permutations $i_{\sigma_1}\ldots
i_{\sigma_n}$ of $i_1\ldots i_n$.  The totally symmetric part of
$A_{i_1\ldots i_n}$ is then $1/n! A_{\{i_1\ldots i_n\}}$.  Similarly,
we denote by $A_{(i_1\ldots i_n)_0}$ the traceless part of
$A_{i_1\ldots i_n}$ (for example, $r_{(i}r_{j)_0}=r_i r_j-1/3
|\vec{r}\,|^2 \delta_{ij}$).  The traceless part of the totally
symmetrized tensor associated to $A_{i_1\ldots i_n}$ is then denoted
$A_{\{i_1\ldots i_n\}_0}$.  The irreducible component of $A_{i_1\ldots
  i_n}$ is therefore $1/n!A_{\{i_1\ldots i_n\}_0}$.  We provide a
practical method to compute the irreducible part of any tensor in the
following section (see equation (\ref{eq:projector.applied}) below).

\section{A basis for irreducible tensors and integer--spin wave--functions}
\label{sec:spin.wf}

In this section we construct an orthonormal basis for the
$(2s+1)$-dimensional space of irreducible rank-$s$ tensors, $s\geq 0$
integer.  The basis irreducible tensors are eigenfunctions of
$\vec{S}\,^2$ and $\ver{z}\cdot\vec{S}$, and satisfy the phase
conventions required of standard angularm--momentum eigenfunctions
(see appendix \ref{sec:angmom}).  They are, therefore, also a basis of
spin-$s$ wave--functions.  Furthermore, as shown in the following
section, those basis tensors are the matrix elements of the unitary
transformation mapping spherical irreducible tensor operators into
cartesian ones.

The basis of spin-1 wave functions consists of the simultaneous
eigenvectors of $(\vec{S}_{(1)}^2)_{i;j}=2\delta_{ij}$ and
$(\ver{z}\cdot \vec{S}_{(1)})_{i;j} = i \eps_{i3j}$.  We choose their
global phase so as to obtain the usual polarization unit vectors
\begin{subequations}
  \label{eq:stdvec}
\begin{equation}
  \label{eq:stdvec.a}
  \veps{(1)}(\pm1) = \mp\frac{1}{\sqrt{2}}(\ver{x}\pm i \ver{y}),
\qquad
  \veps{(1)}(0) = \ver{z}.
\end{equation}
We can also write, more compactly,
\begin{equation}
  \label{eq:stdvec.b}
  \veps{(1)i}(m) = \sqrt{\frac{4\pi}{3}}Y_{1m}(\ver{e}\,^i),
\qquad
   m=0,\pm1,
\end{equation}
\end{subequations}
with $Y_{1m}$ a spherical harmonic, an equality that can easily be
checked and whose origins are explained below in section
\ref{sec:standard}.  The basis vectors (\ref{eq:stdvec}) possess the
following orthonormality, complex conjugation, and completeness
properties
\begin{equation}
  \label{eq:stdvec.prop}
  \veps{(1)}(m')^*\cdot\veps{(1)}(m)=\delta_{m'm},
\quad
\veps{(1)}(m)^*=(-1)^m \veps{(1)}(-m),
\quad
\sum_{m=-1}^1 \veps{(1)i}(m)\veps{(1)j}(m)^* = \delta_{ij}.
\end{equation}
From (\ref{eq:stdvec.a}) and (\ref{eq:cartj}) we find that 
\begin{equation}
  \label{eq:spin.mat.elm.1}
  \langle 1, m' | S_k | 1, m\rangle =
  \veps{(1)i}(m')^* (S_{(1)k})_{i;j} \veps{(1)j}(m),
\end{equation}
so the vectors (\ref{eq:stdvec}) do satisfy the standard conventions
for angular-momentum wave-functions, and in particular the
Condon--Shortley phase convention \cite{con57,wig59,edm96,gal90} (see
Appendix \ref{sec:angmom}).

Before discussing the general case of rank-$s$ tensors it is
convenient to briefly consider first the case $s=2$.  Thus, we look
for a basis of spin-2 wave functions consisting of rank-2 tensors
$\veps{(2)ij}(m)$, $-2\leq m\leq2$.  Because those basis
wave-functions must be eigenfunctions of $\vec{S}^2$ with quantum
number $s=2$, without admixture of states with other $s$, the tensor
$\veps{(2)ij}(m)$ must be irreducible.  Otherwise, some non-vanishing
linear combination of its components would exist that transforms as a
lower-rank tensor, therefore representing states of spin 1 or 0.
Furthermore, the tensors $\veps{(2)ij}(m)$ must be orthonormal and
satisfy the usual conventions (\ref{eq:ladderj}) for angular-momentum
eigenstates.  Given our spin-1 wave functions (\ref{eq:stdvec}), the
general theory of angular momentum indicates that the sought--for
rank-2 tensors are given by
  \label{eq:stdtens}
\begin{equation}
  \label{eq:stdtens.a}
\veps{(2)ij}(m) = \sum_{m_1,m_2=-1}^1 \CG{1}{m_1}{1}{m_2}{2}{m} 
\veps{(1)i}(m_1) \veps{(1)j}(m_2).
\end{equation}
Explicit evaluation of the r.h.s.\ of this equation shows that
$\veps{(2)}(m)$ are symmetric and traceless, therefore irreducible.
The spin operator $\vec{S}$ is represented in the space of rank-2
tensors by the spin matrix $\vec{S}_{(2)}$ from
(\ref{eq:s.mat.prop.a}), with 
\begin{equation}
  \label{eq:squared.spin.matrix.2}
  (\vec{S}_{(2)}^2)_{i_1i_2;k_1k_2} =
  4 \delta_{i_1k_1}\delta_{i_2k_2} - 2 \delta_{i_1i_2}\delta_{k_1k_2}
  + 2 \delta_{i_1k_2} \delta_{i_2k_1}.
\end{equation}
We see from (\ref{eq:squared.spin.matrix.2})
that irreducible tensors are eigenstates of $\vec{S}_{(2)}^2$ with
eigenvalue $s(s+1)=6$, or $s=2$, antisymmetric tensors are eigenstates
with $s=1$ and tensors that are multiples of the identity correspond
to $s=0$.  This decomposition of rank-2 tensor space corresponds, of
course, to the usual decomposition into traceless symmetric,
antisymmetric, and trace parts.

We now turn to the general case $s\geq2$.  The basis spin
wave-functions must be $2s+1$ rank-$s$ irreducible tensors
$\veps{(s)i_1\ldots i_s}(m)$, $-s\leq m\leq s$, constituting an
orthonormal set satisfying the conventions (\ref{eq:cartj}) for
angular-momentum states.  Furthermore, as functions of the spin $s$,
they should be members of a Clebsch--Gordan series of angular-momentum
states.  Having already found the basis tensors for $s=1$, 2, we
proceed recursively to define
\begin{equation}
  \label{eq:std.tens.s}
\veps{(s)i_1\ldots i_s}(m) = \sum_{m_1=-s+1}^{s-1} \sum_{m_2=-1}^1 
\CG{s-1}{m_1}{1}{m_2}{s}{m} \veps{(s-1)i_1\ldots i_{s-1}}(m_1) 
\veps{(1)i_s}(m_2),
\quad
-s\leq m\leq s.
\end{equation}
From this definition and the standard properties of CG
coefficients \cite{gal90,var88}, we can easily derive the
orthonormality and complex-conjugation relations
\begin{equation}
  \label{eq:std.tens.s.prop}
\veps{(n)i_1\ldots i_n}(m')^* \veps{(n)i_1\ldots i_n}(m) =
\delta_{m'm},  
\qquad
\veps{(n)i_1\ldots i_n}(m)^* = (-1)^m \veps{(n)i_1\ldots i_n}(-m).
\end{equation}
From the first equality we see that the tensors (\ref{eq:std.tens.s})
do form an orthonormal set which is, therefore, a basis of a
$(2s+1)$-dimensional subspace of the space of rank-$s$ complex
tensors.  In order to identify that subspace with the subspace of
irreducible tensors, which has the same dimension, we have to prove
that the basis tensors are irreducible.  For that purpose, we notice
that (\ref{eq:std.tens.s}) is a recursion relation with known
coefficients and initial condition (\ref{eq:stdvec}).  Exploiting the
fact that the explicit expression for CG coefficients
\cite{gal90,var88} coupling angular momenta that differ by one unit,
as in (\ref{eq:std.tens.s}), is rather simple, we can solve the
recursion by iteration to find 
\begin{equation}
  \label{eq:std.tens.s.alt}
  \veps{(n)i_1\ldots i_n}(m) = 
\left(\frac{(n+m)!(n-m)!}{(2n)!}\right)^\frac{1}{2} 
\sum_{\substack{s_1,\ldots,s_n=-1\\
      s_1+\cdots+s_n=m}}^1
(\sqrt{2})^{n-\sum_{h=1}^n |s_h|}\,
\veps{(1)i_1}(s_1)\ldots\veps{(1)i_n}(s_n).   
\end{equation}
This expression provides an explicit definition of $\veps{i_1\ldots
  i_n}$, equivalent to (\ref{eq:std.tens.s}). It also shows that
$\veps{(n)i_1\ldots i_n}$ is totally symmetric. Thus, in order to
prove that it is also totally traceless it is enough to show that it
is traceless with respect to the first pair of indices.  That follows
by induction, since $\veps{(2)jj}(m)=0$ as follows by explicit
computation, and since $\veps{(s-1)jji_3\ldots i_{(s-1)}}(m)=0$
implies $\veps{(s)jji_3\ldots i_{s}}(m)=0$, by (\ref{eq:std.tens.s}).
Besides the recursive and explicit definitions (\ref{eq:std.tens.s})
and (\ref{eq:std.tens.s.alt}), an implicit definition of $\veps{(s)}$
can  also be given
\begin{equation}
  \label{eq:std.tens.s.impl}
  \veps{(s)i_1\ldots i_s}(m) = \sqrt{\frac{4\pi}{s!(2s+1)!!}}
  \partial_{i_1} \ldots \partial_{i_s}\left( |\vec{r}|^s Y_{s
      m}(\verr{})\right).
\end{equation}
This equality will be proved below, in section \ref{sec:shy}.  The
total symmetry of $\veps{(\ell)i_1\ldots i_\ell}(m)$ is apparent in
(\ref{eq:std.tens.s.impl}), and its tracelessness follows because
$|\vec{r}|^\ell Y_{\ell m}(\verr{})$ is a solution to the Laplace equation.

Since the set of $(2s+1)$ spin-$s$ wave functions
(\ref{eq:std.tens.s}) is an orthonormal basis of the subspace of
irreducible tensors, the orthogonal projector from the space of
rank-$s$ tensors onto that subspace must be given by
\begin{equation}
  \label{eq:X.projector}
X_{i_1\ldots i_s;j_1\ldots j_s} = \sum_{m=-s}^s \veps{(s)i_1\ldots
  i_s}(m)   \veps{(s)j_1\ldots j_s}(m)^*  =
\sum_{m=-s}^s \veps{(s)i_1\ldots i_s}(m)^*\veps{(s)j_1\ldots j_s}(m).  
\end{equation}
If the left-hand side of this equality is computed explicitly,
(\ref{eq:X.projector}) constitutes a completeness relation for the
standard tensors (\ref{eq:std.tens.s}).  In \cite{bou1} an explicit
expression is given for $s=2$, 3, as well as an algebraic expression
valid for any $s$.  Those expressions are not particularly useful for
the purposes of this paper, however, so we omit them for brevity.
Rather, we shall regard (\ref{eq:X.projector}) as an explicit
expression for the projector $X_{i_1\ldots i_s;j_1\ldots j_s}$.  Its
usefulness is illustrated below in (\ref{eq:projector.applied}).

From (\ref{eq:std.tens.s.alt}) we find the two simple relations
\begin{equation}
  \label{eq:std.tens.upper.lower}
  \veps{(s)i_1\ldots i_s}(\pm s) = \veps{(1)i_1}(\pm1)\dots
  \veps{(1)i_s}(\pm1). 
\end{equation}
These equalities are useful, together with standard recoupling
identities, to compute reduced matrix elements.  Furthermore, they
imply $\veps{(n_1+n_2)i_1\ldots i_{n_1+n_2}}(\pm(n_1+n_2)) =
\veps{(n_1)i_1\ldots i_{n_1}}(\pm n_1)\veps{(n_2)i_{n_1+1}\ldots
  i_{n_1+n_2}}(\pm n_2)$, so the basis spin wave-functions
(\ref{eq:std.tens.s}) comply also with the Condon--Shortley phase
convention for coupled angular momentum states \cite{con57,edm96},
$|j_1,j_2,j_1+j_2,j_1+j_2\rangle = |j_1,j_1;j_2,j_2\rangle$.  Another
important property of the basis tensors (\ref{eq:std.tens.s}) is the
equality
\begin{equation}
  \label{eq:maximal.coupling}
  \veps{(n_1+n_2)k_1\ldots k_{n_1}h_1\ldots h_{n_2}}(m) =
  \sum_{m_1=-n_1}^{n_1} \sum_{m_2=-n_2}^{n_2} 
  \CG{n_1}{m_1}{n_2}{m_2}{n_1+n_2}{m} \veps{(n_1)k_1\ldots k_{n_1}}(m_1)  
  \veps{(n_2)h_{1}\ldots h_{n_2}}(m_2),
\end{equation}
which shows that the maximal coupling of two standard tensors is again
a standard tensor, and of which (\ref{eq:std.tens.s}) is the
particular case $n_1=1$ or $n_2=1$. It is possible to derive
(\ref{eq:maximal.coupling}) directly by substituting
(\ref{eq:std.tens.s}) on its right-hand side and using recoupling
identities \cite{bou1}.  Here we give a less direct proof.  Both
sides of the equality (\ref{eq:maximal.coupling}) are by construction
eigenstates of $\vec{S}^2$ and $\ver{z}\cdot\vec{S}$ with the same
eigenvalues. Thus, for $m<n_1+n_2$ both sides are obtained by repeated
application of $S_-$ to their $m=n_1+n_2$ values. Since for
$m=n_1+n_2$ both sides are seen to be equal by
(\ref{eq:std.tens.upper.lower}), the equality holds for $m<n_1+n_2$ as
well.

The spin operator $\vec{S}$ is represented in the space of rank-$n$
tensors by the spin matrix (\ref{eq:c.tensors}).  Notice that
(\ref{eq:s.mat.prop.a}) means that $\vec{S}_{(n)} = \vec{S}_{(n-1)}
\otimes I_{1} + I_{n-1}\otimes \vec{S}_{(1)}$, which is consistent
with the inductive definition (\ref{eq:std.tens.s}).
From (\ref{eq:std.tens.s}) and
(\ref{eq:c.tensors}) we can show the fundamental relation 
\begin{equation}
  \label{eq:spin.matrix.element}
  \langle n,m' |S_k| n,m \rangle = \veps{(n)i_1\ldots i_n}(m')^*
 (S_{(n)k})_{i_1\ldots i_n;j_1\ldots j_n} \veps{(n)j_1\ldots j_n}(m), 
\end{equation}
where the left-hand side is a standard angular-momentum matrix element
as given by (\ref{eq:cartj}).  A detailed proof of
(\ref{eq:spin.matrix.element}) is given at the end of appendix
\ref{sec:angmom}.  Multiplying both sides of
(\ref{eq:spin.matrix.element}) by $\veps{(n)}(m')$ and
summing over $m'$ we derive the equivalent relation
\begin{equation}
  \label{eq:spin.matrix.action}
  \begin{aligned}
    (S_{(n)k})_{i_1\ldots i_n;j_1\ldots j_n} \veps{(n)j_1\ldots
      j_n}(m) &= \sum_{m'=-n}^n \veps{(n)i_1\ldots i_n}(m')
    \veps{(n)h_1\ldots h_n}(m')^*
    (S_{(n)k})_{h_1\ldots h_n;j_1\ldots j_n} \veps{(n)j_1\ldots j_n}(m)\\
&=\sum_{m'=-n}^n \veps{(n)i_1\ldots i_n}(m')    
  \langle n,m' |S_k| n,m \rangle,
  \end{aligned}
\end{equation}
where in the first equality we used the fact that the tensor on the
left-hand side is irreducible, therefore invariant under the projector
(\ref{eq:X.projector}).  From (\ref{eq:spin.matrix.element}) and
(\ref{eq:spin.matrix.action}) we can inductively prove a
generalization of (\ref{eq:spin.matrix.element}) to any number of
spin-operator components
\begin{equation}
  \label{eq:spin.matrix.element.multi}
\langle n,m' |S_{k_1}\ldots S_{k_p}| n,m \rangle = \veps{(n)i_1\ldots 
  i_n}(m')^* 
(S_{(n)k_1}\ldots S_{(n)k_p})_{i_1\ldots i_n;j_1\ldots j_n}
\veps{(n)j_1\ldots j_n}(m),
\quad 
p\geq1, 
\end{equation}
and from this relation the corresponding generalization of
(\ref{eq:spin.matrix.action}) follows.
The squared spin operator is
represented by (\ref{eq:s.mat.prop.b}).  It is not difficult to verify
from that equation that any rank-$n$ irreducible tensor is an
eigenfunction of $\vec{S}_{(n)}^2$ with eigenvalue $n(n+1)$. Lower
eigenvalues correspond to tensors with less symmetry or with
non-vanishing traces.

Lastly, we notice that, since the standard tensors (\ref{eq:std.tens.s})
or their complex conjugates constitute an orthonormal basis of the
linear space of rank-$s$ irreducible tensors, given any rank-$n$ complex tensor
$A_{i_1\ldots i_n}$ its irreducible part can be written as
\begin{equation}
  \label{eq:projector.applied}
  \frac{1}{n!} A_{\{i_1\ldots i_n\}_0} =
\sum_{m=-n}^n \veps{(n)i_1\ldots i_n}(m)^* \veps{(n)j_1\ldots j_n}(m) 
A_{j_1\ldots j_n}.
\end{equation}
In fact, (\ref{eq:projector.applied}) provides a practical method to
obtain the irreducible part of a reducible tensor.  If the tensor
$A_{i_1\ldots i_n}$ under consideration is irreducible, then the
left-hand side of (\ref{eq:projector.applied}) is equal to
$A_{i_1\ldots i_n}$.

\subsection{Finite rotations}
\label{sec:finrot}

We turn next to the transformation properties of $\veps{(s)}$ under
finite rotations.  The theory of finite rotations in quantum mechanics
is well known 
(see \cite{wig59,lau77,mes61,coh77,gal90,bie09,var88,edm96}).
The generator of infinitesimal rotations is the total
angular--momentum operator $\vec{J}$.  Thus, in terms of the normal
parameters $\vec{\theta}$ the unitary rotation operator is given by
$U(\R(\vec{\theta}\,)) = \exp(-i\vec{\theta}\cdot\vec{J})$.
Similarly, in terms of Euler angles we have
$U(\R(\alpha,\beta,\gamma)) = \exp(-i\alpha J_3) \exp(-i\beta J_2)
\exp(-i\gamma J_3)$.  The matrix representation of the rotation
operator $U(\R)$ in the eigenspace of total angular momentum $j$ is
defined as 
\begin{equation}
  \label{eq:finite.rot.j}
  U(\R)|j,m\rangle = \sum_{m'} |j,m'\rangle  D^j_{m'm}(\R),
\qquad
D^j_{m'm}(\R) = \langle j,m' | U(\R) | j, m\rangle. 
\end{equation}
If the rotation matrix $\R$ is parameterized as a function of the
Euler angles, the resulting rotation matrices
$D^j_{m'm}(\alpha,\beta,\gamma)$ are the Wigner $D$--matrices
\cite{wig59,bie09,var88,edm96}.  Normal parameters may also be used,
and the resulting unitary rotation matrices
$D^j_{m'm}(\vec{\theta}\,)$ (sometimes denoted
$U^j_{m'm}(\vec{\theta}\,)$ \cite{var88}) and their relation to Wigner
$D$--matrices have been extensively studied \cite{bie09,var88}.  The
generic notation $D^j_{m'm}(\R)$ used here refers to any such
parameterization.

From the definition (\ref{eq:finite.rot.j}), by using
(\ref{eq:spin.matrix.element}) and (\ref{eq:finite.rot.tens}), for
$D^j_{m'm}(\R)$ with integer $j$ we get
\begin{equation}
  \label{eq:true.D}
D^\ell_{m'm}(\R) = \veps{(\ell)h_1\ldots h_\ell}(m')^* 
\R_{h_1j_1}\ldots \R_{h_\ell j_\ell} \veps{(\ell)j_1\ldots j_\ell}(m), 
\qquad
0\leq \ell\in \mathbb{Z}, 
\end{equation}
which expresses $D^\ell_{m'm}(\R)$ as the spherical components of the
cartesian rotation matrix $(\R\otimes\cdots\otimes\R)_{h_1\ldots
  h_\ell;j_1\ldots j_\ell}$.  Substituting (\ref{eq:std.tens.s.alt})
into (\ref{eq:true.D}) leads to an expression of $D^n(\R)$ with
integer $n$ in terms of $D^1(\R)$
\begin{equation}
\label{eq:Dn1}
  \begin{aligned}
D^n_{m'm}(\R) &= \frac{2^n}{(2n)!} \sqrt{(n+m')!(n-m')!}
\sqrt{(n+m)!(n-m)!}\\
&\times
\sum_{\substack{s'_1,\ldots,s'_n=-1\\
      s'_1+\ldots+s'_n=m'}}^1
\sum_{\substack{s_1,\ldots,s_n=-1
\rule{0pt}{5.75pt}\\[2.25pt]
      s_1+\ldots+s_n=m}}^1
\frac{1}{(\sqrt{2})^{\sum_{h=1}^n(|s'_h|+|s_h|)}}
D^1_{s'_1s_1}(\R)\ldots D^1_{s'_ns_n}(\R),
  \end{aligned}
\end{equation}
which is formally analogous to (\ref{eq:shyfinal.alt}) and which, like
(\ref{eq:true.D}), holds for any parameterization used for $\R$. The
relations (\ref{eq:true.D}) and (\ref{eq:Dn1}) have not been given in
the previous literature.  

The transformation rules of the basis tensors $\veps{(s)}$ under
finite rotations are summarized by the equalities 
\begin{equation}
  \label{eq:finite.rot.eps}
  \begin{aligned}
\left(e^{-i \vec{\theta}\cdot\vec{S}_{(n)}}\right)_{i_1\ldots
i_n;j_1\ldots j_n} \veps{(n)j_1\ldots j_n}(m) &=
\R_{i_1j_1}(\vec{\theta}\,)\ldots \R_{i_nj_n}(\vec{\theta}\,)
\veps{(n)j_1\ldots j_n}(m) \\
&=  
\sum_{m'} \veps{(n)i_1\ldots i_n}(m')D^n_{m'm}(\R), 
  \end{aligned}
\end{equation}
where the first equality is (\ref{eq:finite.rot.tens}) and the second
one is a direct consequence of (\ref{eq:true.D}).  We see from
(\ref{eq:finite.rot.eps}) that under rotations the basis tensors
$\veps{(s)}$ transform equally well as cartesian or as spherical
tensors.  This property is the basis of the unitary relation between
spherical and cartesian irreducible tensor operators discussed in the
following section.

\section{Cartesian and spherical irreducible tensor operators} 
\label{sec:csitos}

Let $\vec{J}$ be an angular-momentum operator, and $|j,m\rangle$ the
simultaneous eigenstates of $\vec{J}\,^2$ and $\ver{z}\cdot\vec{J}$,
satisfying the standard conventions (see appendix \ref{sec:angmom}).  
An operator $O_{i_1\ldots i_n}$ is a rank-$n$ cartesian tensor
operator relative to $\vec{J}$ if it satisfies the commutation
relation
\begin{equation}
  \label{eq:ctensor}
  [J_k,O_{i_1\ldots i_n}] = -(S_{(n)k})_{i_1\ldots i_n;j_1\ldots j_n}
  O_{j_1\ldots j_n}, 
\end{equation}
with $\vec{S}_{(n)}$ defined in (\ref{eq:c.tensors}).  We say that
$O_{i_1\ldots i_n}$ is a cartesian irreducible tensor operator
(henceforth \cito) if it is totally symmetric and traceless. If
$O_{i_1\ldots i_n}$ is a generic tensor operator, its irreducible
component is $1/n!O_{\{i_1\ldots i_n\}_0}$.  An operator $O_{nm}$,
with integer $n$, $m$ ($n\geq0$, $-n\leq m\leq n$) is a spherical
irreducible tensor operator (henceforth \sito) relative to $\vec{J}$
if
\begin{subequations}
\label{eq:sito}
\begin{equation}
  \label{eq:sito.cart}
  [J_i,O_{nm}] = \sum_{m'=-n}^n \langle n,m'|J_i|n,m\rangle O_{nm'}.
\end{equation}
From this equation and (\ref{eq:spherj}) we get the equivalent
statement that $O_{nm}$ is a \sito\ if
\begin{equation}
  \label{eq:sito.spher}
  [\veps{(1)}(\epsilon)\cdot\vec{J},O_{nm}] = \sqrt{n(n+1)} 
\CG{n}{m}{1}{\epsilon}{n}{m+\epsilon} O_{n(m+\epsilon)},
\qquad
\epsilon=0,\pm1.
\end{equation}
\end{subequations}
If $O_{nm}$ is a \sito, then $O_{nm}^\dagger$ is not (unless $n=0$)
but $(-1)^m O^\dagger_{n(-m)}$ is. We call $O_{nm}$ hermitian if
$O_{nm}=(-1)^m O^\dagger_{n(-m)}$.  

It is well known \cite{lau77,gal90,var88,edm96} that if $a_i$ is a
vector operator then $A_{1m}$ with $A_{1(\pm1)}=\mp(1/\sqrt{2})(a_1\pm
ia_3)$, $A_{10}=a_3$ is a rank-1 \sito, and if $b_{ij}$ is a rank-2
cartesian tensor operator then $B_{2m}$ with
$B_{2(\pm2)}=(1/2)(b_{11}-b_{22}\pm2 i b_{12})$,
$B_{2(\pm1)}=\mp(b_{13}\pm ib_{21})$, $B_{20}=\sqrt{3/2} b_{33}$ is a
rank-2 \sito.  It is clear that $A_{1m}=\veps{(1)i}(m)a_i$ and
$B_{2m}=\veps{(2)ij}(m)b_{ij}$. The following Lemma generalizes those
relations to tensors of any rank.
\begin{lem}
  \label{lem:sito.from.cito}
Let $O_{i_1\ldots i_n}$ be a rank-$n$ cartesian tensor operator,
not necessarily irreducible, relative to the angular-momentum
operator $\vec{J}$.  Then $O_{nm}=\veps{(n)i_1\ldots i_n}(m)
O_{i_1\ldots i_n}$ is a rank-$n$ \sito\ relative to $\vec{J}$.
\end{lem}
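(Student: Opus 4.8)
The plan is to verify the defining \sito\ commutation relation (\ref{eq:sito.cart}) directly, by feeding the cartesian tensor-operator property (\ref{eq:ctensor}) through the contraction with the basis tensor $\veps{(n)i_1\ldots i_n}(m)$. First I would note that the components $\veps{(n)i_1\ldots i_n}(m)$ are ordinary complex numbers, so they pass through the commutator untouched, and then invoke (\ref{eq:ctensor}):
\[
[J_k,O_{nm}] = \veps{(n)i_1\ldots i_n}(m)\,[J_k,O_{i_1\ldots i_n}]
= -\veps{(n)i_1\ldots i_n}(m)\,(S_{(n)k})_{i_1\ldots i_n;j_1\ldots j_n}\,O_{j_1\ldots j_n}.
\]
This reduces the problem to re-expressing the contraction of $\veps{(n)}(m)$ with the spin matrix $\vec{S}_{(n)}$ in terms of angular-momentum matrix elements.

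Here is the key step. In (\ref{eq:ctensor}) the operator sits in the \emph{second} index block of $\vec{S}_{(n)}$, whereas the available action formula (\ref{eq:spin.matrix.action}) contracts $\veps{(n)}(m)$ against that same second block; so I first transpose the spin matrix. From the explicit form (\ref{eq:c.tensors}), the antisymmetry $\eps_{i_tjk_t}=-\eps_{k_tji_t}$ of the Levi-Civita symbol (equivalently, the hermiticity of $\vec{S}_{(n)}$ together with the fact that it is purely imaginary) gives $(S_{(n)k})_{i_1\ldots i_n;j_1\ldots j_n}=-(S_{(n)k})_{j_1\ldots j_n;i_1\ldots i_n}$. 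Applying (\ref{eq:spin.matrix.action}) with the two index blocks interchanged then yields
\[
\veps{(n)i_1\ldots i_n}(m)\,(S_{(n)k})_{i_1\ldots i_n;j_1\ldots j_n}
= -\sum_{m'=-n}^n \veps{(n)j_1\ldots j_n}(m')\,\langle n,m'|S_k|n,m\rangle.
\]

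Substituting this back, the two minus signs cancel, and recognizing $\veps{(n)j_1\ldots j_n}(m')\,O_{j_1\ldots j_n}=O_{nm'}$ by the very definition of $O_{nm'}$, I obtain $[J_k,O_{nm}]=\sum_{m'}\langle n,m'|S_k|n,m\rangle\,O_{nm'}$. Finally I would identify $\langle n,m'|S_k|n,m\rangle$ with the standard angular-momentum matrix element $\langle n,m'|J_k|n,m\rangle$ of (\ref{eq:cartj}) --- they coincide because $\vec{S}_{(n)}$ realizes the spin-$n$ representation --- which is exactly the \sito\ defining relation (\ref{eq:sito.cart}). Since $k$ is arbitrary, this proves the claim, and it is worth stressing that irreducibility of $O_{i_1\ldots i_n}$ is never used, so the statement holds for any cartesian tensor operator. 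I expect the only real pitfall to be the bookkeeping in the transposition step: one must correctly track that $\veps{(n)}(m)$ contracts the first rather than the second block of $\vec{S}_{(n)}$, and carry the accompanying sign, since a dropped minus there would spoil the identification with (\ref{eq:sito.cart}). Everything else is a routine contraction of c-number coefficients.
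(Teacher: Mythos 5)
Your proof is correct, and its middle step takes a genuinely different route from the paper's. The paper never transposes the spin matrix: it first uses the irreducibility of $\veps{(n)}(m)$ to replace $O_{i_1\ldots i_n}$ by its irreducible part $\frac{1}{n!}O_{\{i_1\ldots i_n\}_0}$, writes out the commutator of $J_k$ with that irreducible part in terms of Levi-Civita symbols, re-expands it with the projector identity (\ref{eq:projector.applied}), and only then recognizes the matrix element $\langle n,m'|S_k|n,m\rangle$ through (\ref{eq:spin.matrix.element}). You instead leave $O_{i_1\ldots i_n}$ untouched and move the spin matrix onto the basis tensor: the block antisymmetry $(S_{(n)k})_{i_1\ldots i_n;j_1\ldots j_n}=-(S_{(n)k})_{j_1\ldots j_n;i_1\ldots i_n}$, which does follow from $\eps_{i_tkj_t}=-\eps_{j_tki_t}$ (equivalently from hermiticity plus the purely imaginary entries), combined with (\ref{eq:spin.matrix.action}) applied with the index blocks relabeled, accomplishes in one stroke what the paper's projection argument does in three. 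The two routes rest on the same foundations --- (\ref{eq:spin.matrix.action}) is itself derived in the paper from (\ref{eq:spin.matrix.element}) by means of the projector (\ref{eq:X.projector}), so the irreducibility of $\veps{(n)}$ enters your argument too, just hidden inside the cited identity --- but yours is shorter and keeps the bookkeeping on the spin matrix rather than on the symmetrized, traceless parts of the operator. Your sign tracking is right: the minus sign from (\ref{eq:ctensor}) cancels the one from the transposition, and your closing identification $\langle n,m'|S_k|n,m\rangle=\langle n,m'|J_k|n,m\rangle$ via (\ref{eq:cartj}) is exactly the final step of the paper's proof. Your remark that irreducibility of $O_{i_1\ldots i_n}$ is never used also agrees with the Lemma as stated.
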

\begin{proof}
\begin{align*}
[J_k,O_{nm}] &= \veps{(n)i_1\ldots i_n}(m) [J_k,O_{i_1\ldots i_n}] 
= \frac{1}{n!} \veps{(n)i_1\ldots i_n}(m) [J_k,O_{\{i_1\ldots
  i_n\}_0}]  \\
&= \veps{(n)i_1\ldots i_n}(m) \sum_{p=1}^n i \eps_{ki_pr} \frac{1}{n!}  
O_{\{i_1\ldots i_{p-1}ri_{p+1}\ldots i_n\}_0}
 \\ 
&= \veps{(n)i_1\ldots i_n}(m) \sum_{p=1}^n i \eps_{ki_pr} 
\sum_{m'=-n}^n \veps{(n)i_1\ldots i_{p-1}ri_{p+1}\ldots i_n}(m')^*
\veps{(n)q_1\ldots q_n}(m') O_{q_1\ldots q_n}
 \\
&= \sum_{m'=-n}^n 
\langle n,m'|S_k|n,m\rangle \veps{(n)q_1\ldots q_n}(m') O_{q_1\ldots 
  q_n} 
= \sum_{m'=-n}^n 
\langle n,m'|J_k|n,m\rangle O_{nm'},
    \end{align*}
where the second equality holds because $\veps{(n)}$ is irreducible,
the third one because $O_{\{i_1\ldots i_n\}_0}$ is a cartesian tensor
operator, the fourth one because of (\ref{eq:projector.applied}), and
the fifth one by (\ref{eq:spin.matrix.element}). In the last equality
we used the fact that $\vec{S}$ and  $\vec{J}$ are both
angular-momentum operators and, therefore, their matrix 
elements are both given by (\ref{eq:cartj}).
\end{proof}

Reciprocally, by means of the spin wave--functions of section
\ref{sec:spin.wf}, \citos\ can be obtained from \sitos. 
\begin{lem}
  \label{lem:cito.from.sito}
Let $O_{nm}$ be a rank-$n$ \sito\ relative to the
angular-momentum operator $\vec{J}$.  Then $O_{i_1\ldots i_n}=
\sum_{m=-n}^n \veps{(n)i_1\ldots i_n}(m)^* O_{nm}$ is a rank-$n$
\cito\ relative to $\vec{J}$.  
\end{lem}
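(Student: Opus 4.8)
The goal is the converse of Lemma~\ref{lem:sito.from.cito}: starting from a \sito\ $O_{nm}$, I want to show that the cartesian object $O_{i_1\ldots i_n}= \sum_{m} \veps{(n)i_1\ldots i_n}(m)^* O_{nm}$ satisfies the defining commutation relation (\ref{eq:ctensor}) for a rank-$n$ cartesian tensor operator. The plan is to compute $[J_k, O_{i_1\ldots i_n}]$ directly, pull the commutator inside the sum (the $\veps{(n)}(m)^*$ are c-number tensors, not operators), and substitute the \sito\ commutation relation (\ref{eq:sito.cart}) for $[J_k, O_{nm}]$. This yields a double sum over $m$ and $m'$ involving $\veps{(n)i_1\ldots i_n}(m)^*\langle n,m'|J_k|n,m\rangle$.

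The key step is then to recognize the matrix element $\langle n,m'|J_k|n,m\rangle = \langle n,m'|S_k|n,m\rangle$ and invoke the fundamental relation (\ref{eq:spin.matrix.element}), which expresses this matrix element as a sandwich of the spin matrix $(S_{(n)k})$ between the basis tensors. More precisely, I would use the operator form (\ref{eq:spin.matrix.action}), or its complex-conjugate analogue, to collapse the sum over the intermediate label. The natural move is to relabel the dummy indices so that after applying (\ref{eq:spin.matrix.element}) and the completeness/projector property (\ref{eq:X.projector}) the $m'$-sum reconstructs $(S_{(n)k})_{i_1\ldots i_n;j_1\ldots j_n}$ contracted against $\veps{(n)j_1\ldots j_n}(m)^*$, so that the right-hand side becomes $-(S_{(n)k})_{i_1\ldots i_n;j_1\ldots j_n}\, O_{j_1\ldots j_n}$, exactly matching (\ref{eq:ctensor}).

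The main obstacle is bookkeeping with the complex conjugation and the placement of indices: (\ref{eq:spin.matrix.element}) is stated with $\veps{(n)}(m')^*$ on the left and $\veps{(n)}(m)$ on the right, whereas here the conjugated tensor carries the summation label $m$ that is being contracted with $O_{nm}$, so I must take care to conjugate the matrix-element identity correctly (using that $\vec{S}$ is hermitian, $\langle n,m'|S_k|n,m\rangle^* = \langle n,m|S_k|n,m'\rangle$) or equivalently apply the second form of the completeness relation in (\ref{eq:X.projector}). The sign must also be tracked: the defining relation (\ref{eq:ctensor}) carries an explicit minus sign relative to the pattern $[J_k,O_{nm}]=\sum_{m'}\langle n,m'|J_k|n,m\rangle O_{nm'}$, and this sign should emerge automatically once the spin matrix acts on the conjugated basis tensor via (\ref{eq:spin.matrix.action}) rather than on the unconjugated one. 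Once the indices and conjugations are aligned, the collapse of the double sum is a routine application of the orthonormality (\ref{eq:std.tens.s.prop}) together with (\ref{eq:spin.matrix.element}), and the calculation closes in a handful of lines.
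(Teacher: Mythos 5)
Your proposal is correct, and its skeleton matches the paper's: expand the commutator, insert the \sito\ relation (\ref{eq:sito.cart}), identify $\langle n,m'|J_k|n,m\rangle$ with $\langle n,m'|S_k|n,m\rangle$ (both being standard angular-momentum matrix elements), and collapse the double sum back into the spin matrix acting on the cartesian components. Where you genuinely differ is in how the collapse is executed. The paper substitutes the sandwich form (\ref{eq:spin.matrix.element}) with $S_{(n)k}$ written out in $\eps$-symbols, performs the sum against $O_{nm'}$ first (reconstituting the cartesian components $O_{q_1\ldots q_{t-1}rq_{t+1}\ldots q_n}$), and is then left with the projector $\sum_m \veps{(n)i_1\ldots i_n}(m)^*\veps{(n)q_1\ldots q_n}(m)$ acting on $\sum_t i\eps_{rkq_t} O_{q_1\ldots q_{t-1}rq_{t+1}\ldots q_n}$; the crux of the paper's proof is a dedicated argument that this tensor is totally symmetric (because $O_{q_1\ldots q_n}$ is) and traceless (by antisymmetry of $\eps_{rkq_t}$), hence irreducible, so the projector acts as the identity. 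Your route instead collapses the sum attached to $\veps{}^{\,*}$ first, using hermiticity, $\langle n,m'|S_k|n,m\rangle=\langle n,m|S_k|n,m'\rangle^*$, together with the complex conjugate of (\ref{eq:spin.matrix.action}):
\begin{equation*}
\sum_{m=-n}^n \veps{(n)i_1\ldots i_n}(m)^*\,\langle n,m'|S_k|n,m\rangle
= \Bigl((S_{(n)k})_{i_1\ldots i_n;j_1\ldots j_n}\,\veps{(n)j_1\ldots j_n}(m')\Bigr)^{*}
= -(S_{(n)k})_{i_1\ldots i_n;j_1\ldots j_n}\,\veps{(n)j_1\ldots j_n}(m')^*,
\end{equation*}
where the last step uses that $(S_{(n)k})_{i_1\ldots i_n;j_1\ldots j_n}$ is purely imaginary; the remaining sum over $m'$ then yields $-(S_{(n)k})_{i_1\ldots i_n;j_1\ldots j_n}O_{j_1\ldots j_n}$, i.e., (\ref{eq:ctensor}), with the minus sign emerging exactly as you predicted. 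This is more economical than the paper's proof, because the irreducibility/projector argument has already been spent once and for all in deriving (\ref{eq:spin.matrix.action}) from (\ref{eq:spin.matrix.element}); the paper's version, in exchange, is fully explicit at the index level.

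Two small points. First, your dummy labels are transposed: in $\sum_m \veps{(n)i_1\ldots i_n}(m)^*\sum_{m'}\langle n,m'|J_k|n,m\rangle O_{nm'}$ it is the $m$-sum (the one attached to $\veps{}^{\,*}$) that reconstructs the spin matrix acting on $\veps{(n)}(m')^*$, and the $m'$-sum that then rebuilds $O_{j_1\ldots j_n}$, not the other way around as you wrote; this is harmless, but it is precisely the bookkeeping you flagged. Second, the lemma asserts that $O_{i_1\ldots i_n}$ is a \cito, so a complete proof must also remark that it is irreducible --- immediate, since it is a linear combination of the totally symmetric, traceless tensors $\veps{(n)}(m)^*$ --- a closing observation the paper makes and your plan omits.
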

\begin{proof}
\begin{align*}
[J_k,O_{i_1\ldots i_n}] &=  
\sum_{m=-n}^n \veps{(n)i_1\ldots i_n}(m)^* [J_k, O_{nm}]
= \sum_{m=-n}^n \veps{(n)i_1\ldots i_n}(m)^* 
\sum_{m'=-n}^n \langle n,m'|J_k|n,m\rangle O_{nm'}
\\
&=\sum_{m'=-n}^n O_{nm'}
\sum_{m=-n}^n \langle n,m'|S_k|n,m\rangle 
\veps{(n)i_1\ldots i_n}(m)^* 
\\
&=\sum_{m'=-n}^n O_{nm'}
\sum_{m=-n}^n \sum_{t=1}^n \veps{(n)q_1\ldots q_{t-1}rq_{t+1}\ldots 
  q_n}(m')^*
i\eps_{rkq_t} \veps{(n)q_1\ldots q_n}(m) \veps{(n)i_1\ldots i_n}(m)^* 
\\
&=\sum_{t=1}^n i\eps_{rkq_t} \sum_{m=-n}^n \veps{(n)i_1\ldots
  i_n}(m)^*  
 \veps{(n)q_1\ldots q_n}(m) 
O_{q_1\ldots q_{t-1}rq_{t+1}\ldots q_n}
\\
&=\left(\sum_{m=-n}^n \veps{(n)i_1\ldots i_n}(m)^* 
 \veps{(n)q_1\ldots q_n}(m) \right)
\left(
\sum_{t=1}^n i\eps_{rkq_t} 
O_{q_1\ldots q_{t-1}rq_{t+1}\ldots q_n}
\right)
=
\sum_{t=1}^n i\eps_{ki_tr} 
O_{i_1\ldots i_{t-1}ri_{t+1}\ldots i_n},
\end{align*}
where the second equality holds because $O_{nm}$ is a \sito, the
third one by (\ref{eq:cartj}) and the fourth one by
(\ref{eq:spin.matrix.element}).  Notice that, in the next-to-last
equality, the sum in the last parentheses is totally symmetric in
$q_1\ldots q_n$, because $O_{q_1\ldots q_n}$ is.  It is traceless in,
say, $q_t$, $q_1$ because $\eps_{rkq_t}$ is antisymmetric in $q_t$,
$r$. Thus, by total symmetry it is totally traceless, therefore
irreducible (with respect to $q_1\ldots q_n$).  It is therefore left
unchanged by contraction with the first factor, which is a projector
onto the subspace of irreducible tensors.  The last equality then
follows.  We have, thus, proved that $O_{i_1\ldots i_n}$ is a
cartesian tensor operator relative to $\vec{J}$. Since its
irreducibility is obvious by construction, it is a \cito.
\end{proof}

Furthermore, the \sitos\ obtained from Lemma \ref{lem:sito.from.cito}
and the \citos\ from Lemma \ref{lem:cito.from.sito} are actually all
possible ones.  As we now show, there are no more \sitos\ and \citos\
than those described in the Lemmas.
\begin{crl}
  \label{crl:sito.from.cito}
  $O_{nm}$ is a rank-$n$ \sito\ relative to $\vec{J}$ if and only if
  there exists a rank-$n$ cartesian tensor operator $O_{i_1\ldots
    i_n}$ relative to $\vec{J}$, not necessarily irreducible, such
  that $O_{nm}=\veps{(n)i_1\ldots i_n}(m) O_{i_1\ldots i_n}$.  If
  $O_{i_1\ldots i_n}$ is required to be irreducible, it is unique.
\end{crl}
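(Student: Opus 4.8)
The plan is to assemble this Corollary from the two preceding Lemmas together with the orthonormality and projector properties of the basis tensors $\veps{(n)}$. The statement breaks into three pieces: the two directions of the equivalence and the uniqueness claim. I expect the first two to be essentially immediate, with the uniqueness argument being the only part that requires a genuine idea.

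First I would dispatch the ``if'' direction, which is a restatement of an earlier result: if a cartesian tensor operator $O_{i_1\ldots i_n}$ with $O_{nm}=\veps{(n)i_1\ldots i_n}(m) O_{i_1\ldots i_n}$ exists, then by Lemma \ref{lem:sito.from.cito} the right-hand side is automatically a \sito, with no further work required.

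For the ``only if'' direction I would start from a given \sito\ $O_{nm}$ and exhibit a witness explicitly. The natural candidate is the tensor furnished by Lemma \ref{lem:cito.from.sito}, namely $O_{i_1\ldots i_n}=\sum_{m} \veps{(n)i_1\ldots i_n}(m)^* O_{nm}$, which that Lemma already guarantees to be a \cito, hence in particular a cartesian tensor operator. It then remains only to check that contracting this tensor against $\veps{(n)}(m')$ recovers the original $O_{nm'}$. Substituting the candidate and interchanging the contraction with the sum over $m$, the scalar $\veps{(n)i_1\ldots i_n}(m')\veps{(n)i_1\ldots i_n}(m)^*$ appears, which by the orthonormality relation (\ref{eq:std.tens.s.prop}) (or its complex conjugate, $\delta_{m'm}$ being real) equals $\delta_{m'm}$; the sum then collapses to $O_{nm'}$, as desired. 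Since the tensor produced by Lemma \ref{lem:cito.from.sito} is irreducible by construction, this simultaneously establishes the existence of an irreducible witness.

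The only part calling for real argument is uniqueness, and I expect the projector step to be its crux. I would suppose $O_{i_1\ldots i_n}$ and $O'_{i_1\ldots i_n}$ are both irreducible cartesian tensor operators yielding the same \sito, so that their difference $D_{i_1\ldots i_n}$ is an irreducible tensor with $\veps{(n)j_1\ldots j_n}(m) D_{j_1\ldots j_n}=0$ for every $m$. Applying the projector identity (\ref{eq:projector.applied}) to $D$ and using that the irreducible part of an irreducible tensor is the tensor itself, I obtain $D_{i_1\ldots i_n}=\sum_m \veps{(n)i_1\ldots i_n}(m)^*\bigl(\veps{(n)j_1\ldots j_n}(m) D_{j_1\ldots j_n}\bigr)=0$, whence $O=O'$. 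It is precisely the irreducibility hypothesis, i.e.\ invariance under the projector built from the $\veps{(n)}$, that forces the difference to vanish; dropping it would permit adding any tensor lying in the common kernel of the contractions $\veps{(n)}(m)\,\cdot$, so uniqueness genuinely fails in the reducible case.
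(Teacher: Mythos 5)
Your proposal is correct and follows essentially the same route as the paper: both directions of the equivalence come from Lemmas \ref{lem:sito.from.cito} and \ref{lem:cito.from.sito} plus orthonormality of the $\veps{(n)}(m)$, and uniqueness follows by applying the completeness/projector identity (\ref{eq:projector.applied}) to the irreducible difference of two candidate tensors, which is exactly the paper's ``by completeness'' step. The only cosmetic difference is that the paper additionally exhibits an explicit counterexample (adding a nonzero $A_{i_1\ldots i_n}$ with $A_{\{i_1\ldots i_n\}_0}=0$) to show non-uniqueness in the reducible case, whereas you only note this in passing; that part is not required by the statement itself.
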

\begin{proof}
If $O_{i_1\ldots i_n}$ is a cartesian tensor operator and
$O_{nm}=\veps{(n)i_1\ldots i_n}(m) O_{i_1\ldots i_n}$, then $O_{nm}$
is a \sito\ by Lemma \ref{lem:sito.from.cito}.  On the other hand,
if $O_{nm}$ is a \sito\ then $O_{i_1\ldots i_n}=\sum_{m'}
\veps{(n)i_1\ldots i_n}(m')^* O_{nm'}$ is a \cito\ by Lemma
\ref{lem:cito.from.sito}, and $O_{nm}=\veps{(n)i_1\ldots i_n}(m)
O_{i_1\ldots i_n}$ by the orthonormality of $\veps{(n)i_1\ldots
  i_n}(\mu)$ with $-n\leq\mu\leq n$. The existence statement is
therefore proved.

Now let $O_{nm}$ and $O_{i_1\ldots i_n}$ be as in the Corollary,
$A_{i_1\ldots i_n}\neq 0$ such that $A_{\{i_1\ldots i_n\}_0}=0$, and 
$O'_{i_1\ldots i_n}=O_{i_1\ldots i_n}+A_{i_1\ldots i_n}$.  Clearly,
$O'_{i_1\ldots i_n}\neq O_{i_1\ldots i_n}$ and $\veps{(n)i_1\ldots
  i_n}(m) O_{i_1\ldots i_n} = \veps{(n)i_1\ldots i_n}(m) O'_{i_1\ldots
  i_n}$, so $O_{i_1\ldots i_n}$ is in general not unique.  If
$O_{i_1\ldots i_n}$ and $O'_{i_1\ldots i_n}$ both satisfy the
hipothesis and are irreducible, however, then their difference is also
irreducible and $\veps{(n)i_1\ldots i_n}(m) (O_{i_1\ldots
  i_n}-O'_{i_1\ldots i_n}) = O_{nm}-O_{nm}=0$
for all $-n\leq m\leq n$, so by completeness $O_{i_1\ldots
  i_n}-O'_{i_1\ldots i_n}=0$ and therefore $O_{i_1\ldots i_n}$ is
unique. 
\end{proof}
\begin{crl}
  \label{crl:cito.from.sito}
$O_{i_1\ldots i_n}$ is a rank-$n$ \cito\ relative to $\vec{J}$ if
and only if there exists a (unique) rank-$n$ \sito\ $O_{nm}$ relative
to $\vec{J}$ such that $O_{i_1\ldots i_n}=\sum_{m=-n}^n
\veps{(n)i_1\ldots i_n}(m)^* O_{nm}$. 
\end{crl}
\begin{proof}
If $O_{nm}$ is a \sito\ then $O_{i_1\ldots i_n}= \sum_{m}
\veps{(n)i_1\ldots i_n}(m)^* O_{nm}$ is a rank-$n$ \cito\ relative to
$\vec{J}$, by Lemma \ref{lem:cito.from.sito}.  Conversely, if
$O_{i_1\ldots i_n}$ is a \cito\ then $O_{nm}=\veps{(n)i_1\ldots
  i_n}(m)O_{i_1\ldots i_n}$ is a \sito, by Lemma
\ref{lem:sito.from.cito}, and $\sum_{m} \veps{(n)j_1\ldots j_n}(m)^*$
$O_{nm} = \sum_{m} \veps{(n)j_1\ldots j_n}(m)^* \veps{(n)i_1\ldots i_n}(m)
O_{i_1\ldots i_n} = O_{j_1\ldots j_n}$, where the last equality
follows by completeness.  Thus, existence of $O_{nm}$ is necessary
and sufficient, as stated.

If $O_{nm}$ and $O'_{nm}$ are \sitos\ both satisfying the statement,
then $O_{nm}=\veps{(n)i_1\ldots i_n}(m) O_{i_1\ldots i_n}=O'_{nm}$ for
all $-n\leq m\leq n$.
\end{proof}

We call a \cito\ $O_{i_1\ldots i_n}$ and a \sito\ $O_{nm}$ related
to each other as described in the Lemmas ``dual'' to each other.  It
is easily shown, using Corollaries \ref{crl:sito.from.cito} and
\ref{crl:cito.from.sito}, that $O_{i_1\ldots i_n}=O_{i_1\ldots
  i_n}^\dagger$ if and only if $O_{nm}=(-1)^m O^\dagger_{n(-m)}$.  

As a simple illustration of the results presented in this section,
consider a single spinless particle moving in a central potential. For
that system any rank-$n$ tensor operator must be a linear combination
of the operators
\begin{equation}
  \label{eq:tnnn}
  T^{(n_r,n_p,n_L)}_{i_1\ldots i_n} = r_{i_1}\ldots r_{i_{n_r}}
p_{i_{n_r+1}}\ldots p_{i_{n_r+n_p}}L_{i_{n_r+n_p+1}}\ldots L_{i_{n}},
\quad n_r,n_p,n_L\geq 0,
\quad n_r+n_p+n_L =n.
\end{equation}
For each triple $(n_r,n_p,n_L)$ we have a different cartesian tensor
operator relative to $\vec{L}$.  None of them is irreducible if $n\geq
2$.  Thus, every rank-$n$ \cito\ is a linear combination of
$T^{(n_r,n_p,n_L)}_{\{i_1\ldots i_n\}_0}$, and every rank-$n$ \sito\
$O_{nm}$ can be expressed as a linear combination of
$\veps{(n)i_1\ldots i_n}(m) T^{(n_r,n_p,n_L)}_{i_1\ldots i_n}$.  In
section \ref{sec:standard} below we briefly describe the tensor
structure of the most common \sitos.

\section{The Wigner--Eckart theorem for irreducible cartesian tensor
  operators} 
\label{sec:we}

If $O_{nk}$ is a rank-$n$ \sito\ relative to $\vec{J}$ the
Wigner--Eckart theorem states that its matrix elements are given by
\begin{equation}
\label{eq:WE.sito}
  \langle j',m'|O_{nk}|j,m\rangle = \langle j'||O_{n}||j\rangle
  \CG{j}{m}{n}{k}{j'}{m'}, 
\end{equation}
where the reduced matrix element $\langle j'||O_{n}||j\rangle$ is
independent of $m$, $m'$. From the Wigner--Eckart theorem for \sitos\
(\ref{eq:WE.sito}) and Corollary \ref{crl:cito.from.sito} we
immediately obtain
\begin{thm*}[Wigner--Eckart theorem for \citos]\label{thm:we.cito}
Let $O_{i_1\ldots i_n}$ be a rank-$n$ \cito\ relative to $\vec{J}$.
Then, its matrix elements are given by
\begin{equation*}
  \langle j',m'|O_{i_1\ldots i_n}|j,m\rangle = 
  \langle j'||\veps{(n)}\cdot O||j\rangle
  \CG{j}{m}{n}{m'-m}{j'}{m'} 
  \veps{(n)i_1\ldots i_n}(m'-m)^*, 
\end{equation*}
where the reduced matrix element $\langle j'||\veps{(n)k_1\dots
  k_n}O_{k_1\dots k_n}||j\rangle$ depends on the operator $O$ and on
$n$, $j'$, $j$, but is independent of $m'$, $m$, and of $i_1\ldots
i_n$.
\end{thm*}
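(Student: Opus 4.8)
The plan is to reduce the statement directly to the Wigner--Eckart theorem for \sitos\ (\ref{eq:WE.sito}) by passing through the duality established in the previous section, so that the entire content of the theorem is carried by results already proved. First, since $O_{i_1\ldots i_n}$ is a \cito, Corollary \ref{crl:cito.from.sito} guarantees the existence of a unique dual \sito\ $O_{nk}$ with
\begin{equation*}
O_{i_1\ldots i_n} = \sum_{k=-n}^n \veps{(n)i_1\ldots i_n}(k)^* O_{nk}.
\end{equation*}
Taking matrix elements of both sides between $\langle j',m'|$ and $|j,m\rangle$, and pulling the c-number coefficients $\veps{(n)i_1\ldots i_n}(k)^*$ outside the bracket, I would obtain a finite sum over $k$ of matrix elements of the \sito\ $O_{nk}$.

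Next I would apply the scalar Wigner--Eckart theorem (\ref{eq:WE.sito}) to each term $\langle j',m'|O_{nk}|j,m\rangle$, producing the factor $\langle j'||O_n||j\rangle \CG{j}{m}{n}{k}{j'}{m'}$. The crucial observation is the Clebsch--Gordan selection rule: $\CG{j}{m}{n}{k}{j'}{m'}$ vanishes unless $m + k = m'$, that is, unless $k = m'-m$. Hence the sum over $k$ collapses to the single surviving term $k = m'-m$, leaving exactly the product $\langle j'||O_n||j\rangle \CG{j}{m}{n}{m'-m}{j'}{m'} \veps{(n)i_1\ldots i_n}(m'-m)^*$ asserted in the statement.

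Finally I would identify the reduced matrix element with the one appearing in the theorem. By Corollary \ref{crl:sito.from.cito}, the dual \sito\ is $O_{nk} = \veps{(n)k_1\ldots k_n}(k) O_{k_1\ldots k_n}$, so $O_n$ is precisely the \sito\ $\veps{(n)}\cdot O$, whence $\langle j'||O_n||j\rangle = \langle j'||\veps{(n)}\cdot O||j\rangle$. This quantity depends on $O$, $n$, $j$, $j'$ but not on $m$, $m'$ (by the \sito\ Wigner--Eckart theorem itself) nor on the free indices $i_1\ldots i_n$, which reside entirely in the c-number factor $\veps{(n)i_1\ldots i_n}(m'-m)^*$. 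I do not expect a genuine obstacle here—consistent with the paper's remark that the result is obtained ``immediately''—since the substantive work has already been done in the duality Corollaries; the only point requiring care is the bookkeeping that ensures the single value $k = m'-m$ selected by the Clebsch--Gordan coefficient is tracked correctly through the collapse of the sum.
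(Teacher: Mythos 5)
Your proposal is correct and follows essentially the same route as the paper, which derives the theorem ``immediately'' from the Wigner--Eckart theorem for \sitos\ (\ref{eq:WE.sito}) together with Corollary \ref{crl:cito.from.sito}; your explicit tracking of the Clebsch--Gordan selection rule collapsing the sum to $k=m'-m$, and the identification $\langle j'||O_n||j\rangle = \langle j'||\veps{(n)}\cdot O||j\rangle$ via duality, are precisely the details the paper leaves implicit.
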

If the \sito\ $O_{nm}$ and the \cito\ $O_{i_1\ldots i_n}$ are dual to
each other in the sense of section \ref{sec:csitos}, then it is clear
that $\langle j'||O_{n}||j\rangle = \langle j'||\veps{(n)}\cdot
O||j\rangle$.  The Wigner--Eckart theorem for \citos\ applies to any
scalar and cartesian vector operators, since for those the
irreducibility requirement is moot.  It can be extended to rank-$n\geq
2$ reducible tensor operators by expanding them in their irreducible
components of rank $\widetilde{n}$ with $0\leq \widetilde{n}\leq n$
and applying the theorem to each irreducible component separately.  To
the best of our knowledge, however, no such general decomposition
theorem has been given in the literature.  The simplest example is, of
course, that of rank-2 cartesian tensors, whose decomposition is
trivial to obtain,
\begin{equation}
\label{crl:we.rank2}  
\begin{gathered}
    \begin{split}
\langle j',m'|O_{h_1h_2}&|j,m\rangle =   
\langle j'||O_{(2)}||j\rangle \CG{j}{m}{2}{m'-m}{j'}{m'}\,
\veps{(2)h_1h_2}(m'-m)^* \\
&+\langle j'||O_{(1)}||j\rangle \CG{j}{m}{1}{m'-m}{j'}{m'}\,
\eps_{h_1h_2k}\veps{(1)k}(m'-m)^* 
+\langle j||O_{(0)}||j\rangle \delta_{j'j}\delta_{m'm}
\delta_{h_1h_2}, 
    \end{split}\\
  O_{(2)i_1i_2} = \frac{1}{2} O_{\{i_1i_2\}_0},
\quad
  O_{(1)h} = \frac{1}{2} \eps_{hk_1k_2} O_{k_1k_2},
\quad
  O_{(0)} = \frac{1}{3} O_{kk}.
\end{gathered}
\end{equation}
Here, the operators $O_{(q)}$, $0\leq q\leq2$, are rank-$q$ \citos\
relative to $\vec{J}$.  We wrote their reduced matrix elements without
$\veps{}$ tensors for brevity, since no notational ambiguity may arise
in this case.  A generalization of the above theorem to the case of
totally--symmetric reducible tensor operators of any rank is given
below in section \ref{sec:totsym}, and to partially irreducible tensor
operators in section \ref{sec:parirr}.

\section{Standard spherical irreducible tensor operators}
\label{sec:standard}

In this section we consider some standard \sitos\ from the
point of view of the preceding sections, and discuss their relation to
\citos.  As a by-product, some relations among standard \sitos\ are found that
may be more difficult to obtain by other methods, as shown in this
section and in section \ref{sec:dershy}.

\subsection{Spherical harmonics}
\label{sec:shy}

Spherical harmonics, as is easy to prove \cite{bou1}, can be written
as 
\begin{equation}
  \label{eq:shyfinal}
  Y_{\ell m}(\verr{}) = N_\ell \veps{(\ell)i_1\ldots i_\ell}(m) \verr{i_1}
  \ldots \verr{i_\ell},
\qquad
N_\ell= \frac{1}{\sqrt{4\pi}} \sqrt{\frac{(2\ell+1)!!}{\ell!}}.
\end{equation}
This equation shows that the \cito\ dual to $Y_{\ell m}$ in the sense
of Lemma \ref{lem:sito.from.cito} is $\verr{(i_1}\ldots
\verr{i_\ell)_0}$, up to a multiplicative constant. The reduced matrix
element of $Y_{\ell m}$ is well-known from the literature
\cite{gal90,var88}, so from (\ref{eq:shyfinal}) we find
\begin{equation}
  \label{eq:shy.reduced}
\langle \ell'
||\veps{(n)i_1\ldots i_n} \verr{i_1} \ldots \verr{i_n}||
\ell\rangle = \frac{1}{N_n} \langle \ell' || Y_{n} ||
\ell\rangle, 
\qquad
  \langle \ell' || Y_{n} || \ell\rangle =
  \sqrt{\frac{(2\ell+1)(2n+1)}{4\pi(2\ell'+1)}} 
  \CG{\ell}{0}{n}{0}{\ell'}{0},
\end{equation}
From (\ref{eq:shyfinal}) and the second equality in
(\ref{eq:std.tens.s.prop}) we recover the familiar relation $Y_{\ell
  m}(\verr{})^*=(-1)^m Y_{\ell(-m)}(\verr{})$.  We notice that
(\ref{eq:std.tens.s.impl}) follows immediately from
(\ref{eq:shyfinal}).  The relation inverse to (\ref{eq:shyfinal}) is
given by (\ref{eq:shy.inverse}) below.  Substituting
(\ref{eq:shyfinal}) in the addition theorem for spherical harmonics
leads to
\begin{equation}
  \label{eq:legendre}
  P_\ell(\verr{}\cdot\verr{}\,') =  \frac{(2\ell-1)!!}{\ell!}
\ver{r}_{(i_1}\ldots\ver{r}_{\,i_\ell)_0} 
\ver{r}\,'_{(i_1}\ldots\ver{r}\,'_{\,i_\ell)_0}, 
\end{equation}
which gives a multilinear representation for Legendre polynomials.

The expressions (\ref{eq:shyfinal}) for $Y_{\ell m}$ and
(\ref{eq:legendre}) for $P_\ell$ as multilinear forms on the unit
sphere have useful applications, some of which we discuss below.
Here, we briefly mention that (\ref{eq:shyfinal}) yields the numerical
coefficients in Stevens' operator-equivalent method \cite{ste52}, as
shown by the relation
\begin{equation}
  \label{eq:stvns1}
  \langle j,m'|\veps{(n)i_1\ldots i_n}(k) r_{i_1}\ldots r_{i_n} |
  j,m\rangle = \frac{\langle j||\veps{(n)j_1\ldots j_n} r_{j_1}\ldots
    r_{j_n}||j\rangle}{\langle j||\veps{(n)k_1\ldots k_n} J_{k_1}\ldots
    J_{k_n}||j\rangle} 
  \langle j,m'|\veps{(n)h_1\ldots h_n}(k) J_{h_1}\ldots J_{h_n} |
  j,m\rangle,
\end{equation}
with the reduced matrix elements given by (\ref{eq:shy.reduced}) and
\cite{bou1} 
\begin{equation}
  \label{eq:stvns2}
\langle j||\veps{(n)k_1\ldots k_n} J_{k_1}\ldots J_{k_n}||j\rangle = 
2^{-n}\sqrt{\frac{n!}{(2n-1)!!}}
\sqrt{\frac{(2j+n+1)!}{(2j+1)(2j-n)!}}. 
\end{equation}
In the simplest case $n=2$ ($n=1$ being trivial) from
(\ref{eq:stvns1}) we get
\begin{equation}
  \label{eq:stvns3}
  \begin{gathered}
2z^2-(x^2+y^2) = 3z^2-r^2 = r^2 C(j,2) (2J_z^2-J_x^2-J_y^2)= C(j,2)
(3J_z^2-j(j+1)), \\
r_i r_j = r^2 C(j,2) \frac{1}{2} (J_i j_j + J_j J_i), \quad i\neq j,\\
C(j,2) = -4 \sqrt{j(j+1)} \sqrt{\frac{2j+1}{(2j-1)(2j+3)}} 
\sqrt{\frac{(2j-2)!}{(2j+3)!}} .
  \end{gathered}
\end{equation}
Further discussion of the operator-equivalent method is outside the
scope of this paper, however, so we refer to \cite{ste52,hof91}. 

\subsubsection{An explicit expression for $Y_{\ell m}$}
\label{sec:xplct}

Taking (\ref{eq:shyfinal}) as a definition of spherical harmonics
leads to an explicit expression for them.  From (\ref{eq:shyfinal}) we
easily obtain $Y_{1m}(\verr{})$ in terms of the spherical coordinates
$\theta$, $\varphi$ of $\verr{}$.  On the other hand, from
(\ref{eq:shyfinal}) and (\ref{eq:std.tens.s.alt}) we obtain
\begin{equation}
  \label{eq:shyfinal.alt}
Y_{\ell m}(\verr{}) = \sqrt{\frac{2\ell+1}{4\pi}} \left(
\sqrt{\frac{4\pi}{3}}\right)^\ell \frac{1}{\ell!} 
\sqrt{(\ell+m)!(\ell-m)!} 
\sum_{\substack{s_1,\ldots,s_n=-1\\s_1+\ldots+s_n=m}}^1
\frac{1}{(\sqrt{2})^{\sum_{h=1}^n|s_h|}} 
Y_{1s_1}(\verr{})\ldots Y_{1s_\ell}(\verr{}).
\end{equation}
This expression for $Y_{\ell m}$ in terms of $Y_{1s}$ can be put in a
more explicit form by exploiting the total symmetry of the summand
under permutations of the summation indices.  We temporarily assume
$m>0$ for concreteness, and define $N_{\pm1,0}$ as the number of 1s,
-1s and 0s, respectively, in the summation multiindex
$(s_1,\ldots,s_\ell)$ in (\ref{eq:shyfinal.alt}).  Thus,
$N_1+N_{-1}+N_0=\ell$, $N_1-N_{-1}=m$, and therefore
$N_0=\ell+m-2N_1$.  It is also easy to see, by considering a few
particular cases, that $m\leq N_1 \leq [(\ell+m)/2]$, where $[\ldots]$
denotes integer part.  There are $\binom{\ell}{N_1}$ ways of
distributing $N_1$ 1s among $\ell$ indices $s_i$, and there are
$\binom{\ell-N_1}{N_1-m}$ ways to distribute $N_{-1}=N_1-m$ -1s among
the remaining $\ell-N_1$ indices, and the remaining $N_0$ indices must
take the value 0.  Thus, for given $\ell$, $m$ we can reduce the sum
in (\ref{eq:shyfinal.alt}) to a single sum over $N_1$.  By using the
explicit form of $Y_{1s}$, from (\ref{eq:shyfinal.alt}) we obtain
\begin{equation}
  \label{eq:shyfinal.alt.2}
  \begin{aligned}
Y_{\ell m}(\verr{}) &= \sqrt{\frac{2\ell+1}{4\pi}}
\sqrt{\frac{(\ell-m)!}{(\ell+m)!}} e^{im\varphi} P_{\ell
  m}(\cos\theta), \\
P_{\ell m}(x) &= \frac{(\ell+m)!}{\ell!} \sum_{N_1=m}^{[(\ell+m)/2]} 
\binom{\ell}{N_1} \binom{\ell-N_1}{N_1-m}
\frac{(-1)^{N_1}}{2^{2N_1-m}} (\sqrt{1-x^2})^{2N_1-m}
x^{\ell-2N_1+m},
  \end{aligned}
\end{equation}
valid for integer $\ell\geq0$ and $-\ell\leq m\leq\ell$.  We see that
starting from (\ref{eq:shyfinal}) we not only recovered the standard
expression (\ref{eq:shy.std}) but also obtained an explicit expression
for the associated Legendre function $P_{\ell m}$ by purely tensorial
considerations without reference to the Legendre differential
equation. (For further expressions for $P_{\ell m}$ and $Y_{\ell m}$
see \cite{var88,bat}.  See \cite{tor13} for related recent results.)

\subsection{Bipolar spherical harmonics}
\label{sec:bshy}

Consider a system formed by two spinless particles moving in a central
potential, with orbital angular momenta $\vec{L}_{1,2}$ coupled to
total angular momentum $\vec{J}$. Its
angular wave function is given by a bipolar spherical harmonic
$Y^{\ell_1\ell_2}_{j m} (\verr{1},\verr{2})$, defined as \cite{var88}
\begin{equation}
  \label{eq:bipolar}
  Y^{\ell_1\ell_2}_{j m} (\verr{1},\verr{2})
\equiv \langle \verr{1}, \verr{2}| 
\ell_1, \ell_2, j, m \rangle =
\sum_{\mu_1=-\ell_1}^{\ell_1}\sum_{\mu_2=-\ell_2}^{\ell_2} 
\CG{\ell_1}{\mu_1}{\ell_2}{\mu_2}{j}{m} 
Y_{\ell_1\mu_1}(\verr{1})Y_{\ell_2\mu_2}(\verr{2}).
\end{equation}
Its complex conjugation properties follow from (\ref{eq:bipolar}),
$Y^{\ell_1\ell_2}_{\ell m}(\verr{1},\verr{2})^* =
(-1)^{\ell_1+\ell_2-\ell}(-1)^m
Y^{\ell_1\ell_2}_{\ell(-m)}(\verr{1},\verr{2}) =(-1)^m
Y^{\ell_2\ell_1}_{\ell(-m)}(\verr{2},\verr{1})$.  
From the Clebsch--Gordan coupling of two spherical harmonics
\cite{var88} we have the equality
\begin{equation}
  \label{eq:bipolar.local}
  Y^{\ell_1\ell_2}_{j m} (\verr{},\verr{}) = 
\sqrt{\frac{(2\ell_1+1)(2\ell_2+1)}{4\pi(2j+1)}} 
\CG{\ell_1}{0}{\ell_2}{0}{j}{0} Y_{jm}(\verr{}),
\end{equation}
that we will need below.  Bipolar spherical harmonics define operators
in the Hilbert state--space of the two--particle system, acting
multiplicatively in the coordinate representation
$\langle\verr{1},\verr{2} | Y^{\ell_1\ell_2}_{jm} | \psi\rangle =
Y^{\ell_1\ell_2}_{jm}(\verr{1},\verr{2}) \psi(\verr{1},\verr{2})$. An
analogous, but different, operator is obtained in the momentum
representation by the multiplicative action of
$Y^{\ell_1\ell_2}_{jm}(\verp{1},\verp{2})$. As operators, bipolar
spherical harmonics are \sitos\ of rank $j$ relative to $\vec{J} =
\vec{L}_1 + \vec{L}_2$.

If the coupling in (\ref{eq:bipolar}) is maximal, $j=\ell+\ell'$, then
$Y^{\ell\ell'}_{j m}$ has a tensorial representation analogous to
(\ref{eq:shyfinal}) that follows from (\ref{eq:maximal.coupling})
\begin{equation}
  \label{eq:max.bshy}
Y^{\ell\ell'}_{(\ell'+\ell) m} (\verr{},\ver{r}\,') = 
N_{\ell} N_{\ell'} \veps{(j)i_1\ldots i_{\ell} j_1\ldots j_{\ell'}}(m) 
\verr{i_1}^{}\ldots \verr{i_\ell}^{} \ver{r}\,'_{j_1}\ldots
\ver{r}\,'_{j_{\ell'}}. 
\end{equation}
Thus, a maximally-coupled bipolar spherical harmonic is dual to the
cartesian irreducible tensor operator $N_{\ell} N_{\ell'}
\verr{\{i_1}^{}\ldots \verr{i_\ell}^{}\ver{r}\,'_{i_{\ell+1}}\ldots
\ver{r}\,'_{i_{\ell+\ell'}\}_0}$. We have, in particular, 
\begin{equation}
  \label{eq:stdtens.compact}
  \veps{(2)ij}(m) = \frac{4\pi}{3} Y^{11}_{2m}(\ver{e}^i,\ver{e}^j)
\end{equation}
which is the rank-2 analog of (\ref{eq:stdvec.b}).

If the angular-momentum coupling in $Y^{\ell\ell'}_{j m}
(\verr{},\ver{r}\,')$ is not maximal its tensorial expression is more
complicated.  It has been derived in the general case
$|\ell'-\ell|\leq j\leq\ell'+\ell$ in \cite{bou2} using the methods of
\cite{bou1}, which are beyond the scope of this paper.  We quote the
result without proof in (\ref{eq:bipolar.gen.not}) for completeness,
and because we need some of its consequences below. From that
expression we can read off the cartesian irreducible tensor dual to
$Y^{\ell\ell'}_{j m} (\verr{},\ver{r}\,')$, which is the irreducible
part of the tensor contracted with $\veps{(j)i_1\ldots i_j}$ there. 

\subsubsection{The binomial expansion for spherical harmonics}
\label{sec:binomial}

Given two position vectors $\vec{r}_a$, $a=1,2$, it may be of interest
to compute $Y_{\ell m}(\verr{})$, with $\vec{r}$ a linear combination
of $\vec{r}_{1,2}$, in terms of $Y_{\ell m}({\verr{1,2}})$.  For
instance, $\vec{r}$ may be the center-of-mass of $\vec{r}_{1,2}$ or
their associated relative position vector $\vec{r}_1-\vec{r}_2$, or,
in the momentum representation, the total or relative momentum of two
particles.  Setting
$\verr{}=(\alpha\vec{r}_1+\beta\vec{r}_2)/|\alpha\vec{r}_1+\beta\vec{r}_2|$
in (\ref{eq:shyfinal}), with $\alpha$, $\beta$ real numbers, applying
the binomial expansion for the multiple product of $\verr{}$ there,
and using (\ref{eq:maximal.coupling}), leads to the binomial expansion
for spherical harmonics
\begin{equation}
  \label{eq:binomial.shy.final}
  Y_{\ell m}\left(\frac{\alpha\vec{r}_1+\beta\vec{r}_2}
{|\alpha\vec{r}_1+\beta\vec{r}_2|}\right) = 
\sqrt{4\pi}
\sum_{n=0}^\ell \frac{1}{\sqrt{2(\ell-n)+1}}
\sqrt{\binom{2\ell+1}{2n+1}}
\frac{(\alpha|\vec{r}_1|)^n(\beta|\vec{r}_2|)^{\ell-n}}
{|\alpha\vec{r}_1+\beta\vec{r}_2|^\ell}
Y^{n(\ell-n)}_{\ell m}(\verr{1},\verr{2}).
\end{equation}
Notice that the bipolar spherical harmonic in
(\ref{eq:binomial.shy.final}) is maximally coupled.  The binomial
expansion can be extended to a multinomial expansion for $Y_{\ell
  m}(\vec{R}/|\vec{R}|)$ with $\vec{R}=\sum_{a=1}^k \alpha_a
\vec{r}_a$, $k\geq2$. In that case, because all angular-momentum
couplings are maximal as in (\ref{eq:binomial.shy.final}), all
coupling schemes lead to the same result.

\subsection{Tensor spherical harmonics}
\label{sec:tshy}

Tensor spherical harmonics are simultaneous eigenfunctions of
$\vec{L}^2$, $\vec{S}^2$, $\vec{J}^2$, and $J_3$, as is appropriate to
the wave functions of a particle with spin $\vec{S}$ and orbital
angular-momentum $\vec{L}$ coupled to total angular-momentum
$\vec{J}$.  They are defined as \cite{bou2}
\begin{equation}
  \label{eq:tshy}
  \left(Y^{\ell n}_{jm}(\verr{})\right)_{i_1\ldots i_n} =
  \sum_{\mu=-\ell}^\ell \sum_{\nu=-n}^n \CG{\ell}{\mu}{n}{\nu}{j}{m} 
  Y_{\ell\mu}(\verr{}) \veps{(n)i_1\ldots i_n}(\nu).
\end{equation}
From the second equality in (\ref{eq:std.tens.s.prop}) and the
conjugation of spherical harmonics we derive the conjugation relation
$(Y^{\ell s}_{j m}(\verr{}))_{i_1\ldots i_s}^* = (-1)^{\ell+s-j}
(-1)^m (Y^{\ell s}_{j(-m)}(\verr{}))_{i_1\ldots i_s}$.  The quantities
$Y^{\ell n}_{jm}(\verr{})$ are tensor functions defined on the unit
sphere, known as spin-$n$ spherical harmonics or rank-$n$ tensor
spherical harmonics.  If $n=0$ we have that $Y^{\ell 0}_{\ell
  m}=Y_{\ell m}$ is an ordinary, scalar spherical harmonic.  For
$n=1$, $Y^{\ell 1}_{j m}(\verr{})$ is a vector spherical harmonic.  In
that case (\ref{eq:tshy}) agrees with the definition given in
\cite{gal90,ynd96}, agrees with \cite{new82} up to a factor of $i$,
and differs from those in \cite{mes61,var88} in the choice of spin
wave-function basis.  $Y^{\ell s}_{j m}(\verr{})$ transforms as a
rank-$s$ cartesian tensor relative to $\vec{J}$.  Its total
contraction with an orbital rank-$s$ tensor operator transforms as a
rank-$j$ \sito\ relative to $\vec{L}$, scalar relative to $\vec{S}$. 

Maximally coupled tensor spherical harmonics ($j=\ell+n$) have a
simple tensorial expression analogous to (\ref{eq:shyfinal}).  By
substituting (\ref{eq:shyfinal}) in (\ref{eq:tshy}) and using
(\ref{eq:maximal.coupling}) we get 
\begin{equation}
  \label{eq:maxtshy}
  \left(Y^{\ell n}_{(\ell+n)m}(\verr{})\right)_{i_1\ldots i_n} =  
N_\ell \veps{(\ell+n)i_1\ldots i_{\ell+n}} \verr{i_{n+1}} \ldots
\verr{i_{n+\ell}}. 
\end{equation}
In particular we have the relations
\[
(Y^{0s}_{sm}(\verr{}))_{i_1\ldots 
i_s}=\frac{1}{\sqrt{4\pi}} \veps{(s)i_1\ldots i_s}(m),
\qquad
\left(Y^{\ell s}_{(\ell+s)m}(\verr{})\right)_{i_1\ldots i_s}
\verr{i_s}=   \sqrt{\frac{\ell+1}{2\ell+3}} 
\left(Y^{(\ell+1)(s-1)}_{(\ell+s)m}(\verr{})\right)_{i_1\ldots
  i_{s-1}}, 
\]
and $\left(Y^{\ell n}_{(\ell+n)m}(\verr{})\right)_{i_1\ldots i_n}
\verr{i_{1}} \ldots \verr{i_{n}} = (N_\ell/N_{\ell+n}) Y_{(\ell+n)
  m}(\verr{})$ is an ordinary spherical harmonic.  The tensorial
expression of non-maximally coupled tensor spherical harmonics is more
involved than that of maximally coupled ones.  We can get some insight
into it by relating tensor spherical harmonics to bipolar ones and using
(\ref{eq:bipolar.gen.not}).  From (\ref{eq:shyfinal}),
(\ref{eq:bipolar}) and (\ref{eq:tshy}) we obtain the bipolar spherical
harmonics as
\begin{equation}
  \label{eq:tensor.from.bipolar}
 Y^{\ell s}_{jm}(\verr{},\verrp{})=
N_s  \left(Y^{\ell s}_{jm}(\verr{})\right)_{i_1\ldots i_s}
  \ver{r}\,'_{i_1}\ldots\ver{r}\,'_{i_s}.
\end{equation}
This relation can be inverted to write tensor spherical harmonics in
terms of bipolar ones
\begin{equation}
  \label{eq:tensor.from.bipolar.b}
  \left(Y^{\ell s}_{jm}(\verr{})\right)_{i_1\ldots i_s} =
\sqrt{4\pi} \sqrt{\frac{1}{s!(2s+1)!}}   (-1)^{\ell+s-j}
\partial'_{i_1}\ldots\partial'_{i_s} 
\left(|\vec{r}\,'|^s Y^{s\ell}_{jm}(\ver{r}\,',\verr{})\right).
\end{equation}
From this equation and (\ref{eq:bipolar.gen.not}) the tensorial
expressions of non-maximally coupled tensor spherical harmonics can be
obtained.  More interestingly, by writing the cartesian tensors
appearing in (\ref{eq:bipolar.gen.not}) as maximally coupled tensor
spherical harmonics, through (\ref{eq:maxtshy}), we can express
non-maximally-coupled tensor spherical harmonics in terms maximally
coupled ones. Once those relations have been obtained, use of
(\ref{eq:maxtshy}) yields the sought-for tensorial expressions. We
will restrict ourselves here to quoting the results for next-- and
next--to--next--to--maximal coupling.  For $j=\ell+s-1$ we have
\begin{equation}
  \label{eq:NMC}
  Y^{\ell s}_{(\ell+s-1)m}(\verr{}) =
  -\sqrt{\frac{2\ell+1}{s(\ell+s)}}
  \left(\verr{}\cdot\vec{S}_{(s)}\right)\cdot Y^{(\ell-1)
    s}_{(\ell+s-1)m}(\verr{}), 
\end{equation}
Notice that in the momentum representation the matrix on the
right-hand side would be the helicity operator $\verp{}\cdot\vec{S}$.  
Similarly, for $j=\ell+s-1$ and $s\geq2$  we have
\begin{subequations}
  \label{eq:NNMC}
\begin{equation}
  \label{eq:NNMC.2}
  \begin{aligned}
  Y^{\ell s}_{(\ell+s-2)m}(\verr{}) &=  \frac{2\ell-1}{\sqrt{s(2s-1)}}
  \sqrt{\frac{2\ell+1}{\ell-1}}\frac{1}{\sqrt{2(\ell+s)-1}\sqrt{\ell+s-1}} 
\\
&\quad\times
\left(\rule{0pt}{14pt}
\left(\verr{}\cdot\vec{S}_{(s)}\right)\cdot
\left(\verr{}\cdot\vec{S}_{(s)}\right)  
-\frac{s(\ell+s-1)}{2\ell-1}\right)
\cdot  Y^{(\ell-2)s}_{(\ell+s-2)m}(\verr{}),
  \end{aligned}
\end{equation}
and, if $s=1$,
\begin{equation}
  \label{eq:NNMC1}
  Y^{\ell 1}_{(\ell-1)m}(\verr{}) = \sqrt{\frac{\ell-1}{\ell}} 
  Y^{(\ell-2) 1}_{(\ell-1)m}(\verr{}) - \sqrt{\frac{2\ell-1}{\ell}} 
  Y_{(\ell-1)m}(\verr{})\, \verr{}.
\end{equation}
\end{subequations}
Analogous relations can be obtained for $j=\ell+s-\nu$ with $\nu\geq
3$.  Additionally, such relations as (\ref{eq:NMC}) and
(\ref{eq:NNMC}) are of interest because they are independent of the
spin wave-function basis used to define the tensor spherical
harmonics.  For brevity, however, we will not dwell longer on those
issues here. 

\subsection{Spin polarization operators}
\label{sec:spin.polar}

In the $(2s+1)$--dimensional space of spin states of a spin-$s$
particle, the spin operator $\vec{S}$ may be viewed as a
$(2s+1)\times(2s+1)$ matrix and $\vec{S}^2=s (s+1) I$, with $I$ the
identity matrix.  A complete set of $(2s+1)^2$ matrices in that space
is given by the polarization operators defined in section 2.4 of
\cite{var88} as 
\begin{equation}
  \label{eq:spin.polar}
  T_{\ell m}(s) = \kappa_{\ell}(s)
  \left(\vec{S}\cdot\vec{\nabla}\right)^\ell \left(|\vec{r}\,|^\ell
    Y_{\ell m}(\verr{})\right),
\qquad
\kappa_{\ell}(s)=\frac{2^\ell}{\ell!} \sqrt{\frac{4\pi
    (2s-\ell)!}{(2s+\ell+1)!}},
\qquad
0\leq \ell \leq 2s,
\quad
-\ell \leq m \leq \ell.   
\end{equation}
From this equation and (\ref{eq:std.tens.s.impl}) we immediately
obtain the equivalent expression 
\begin{equation}
  \label{eq:spin.polar.2}
  T_{\ell m}(s) = \kappa'_{\ell}(s)  \veps{(\ell)i_1\ldots i_\ell}(m)
  S_{i_1} \ldots S_{i_\ell},
\quad
\kappa'_{\ell}(s) = 2^\ell
\sqrt{\frac{(2s-\ell)!(2\ell+1)!!}{\ell!(2s+\ell+1)!}}. 
\end{equation}
Thus, $T_{\ell m}(s)$ is a rank-$\ell$ \sito\ relative to $\vec{S}$,
dual to the \cito\ $1/\ell!  S_{\{i_1} \ldots S_{i_\ell\}_0}$ (i.e.,
the irreducible component of the cartesian tensor operator $S_{i_1}
\ldots S_{i_\ell}$) introduced in \cite{zem65}.  The reduced matrix
element of $T_{\ell m}(s)$ is $\langle s ||T_{\ell m}(s)||s \rangle =
\sqrt{(2\ell+1)/(2s+1)}$.  The operator $T_{\ell m}(s)$ is hermitian,
$T_{\ell m}(s)^\dagger = (-1)^m T_{\ell (-m)}(s)$, and satisfies the
orthonormality relation
\begin{equation}
  \label{eq:spin.polar.3}
  \mathrm{Tr} \left(T_{\ell' m'}(s)^\dagger T_{\ell m}(s) \right)
  \equiv 
  \sum_{\mu} \langle s,\mu | T_{\ell' m'}(s)^\dagger T_{\ell m}(s) | s,
  \mu \rangle = \delta_{\ell'\ell} \delta_{m'm},
\end{equation}
which can be readily verified by inserting $\sum_{\mu'}|s,
\mu'\rangle\langle s,\mu'|$ between the two operators and applying the
Wigner--Eckart theorem to the resulting matrix elements.  A  more
detailed description of the properties of the operators $T_{\ell
  m}(s)$ is given in the reference cited above.

\subsection{Electric multipole moments}
\label{sec:emultipole}

As an application of the results of section \ref{sec:shy}, we derive
in this section a general expression for the cartesian $2^n$-pole
electric tensor and its relation to the spherical one.  The magnetic
multipoles are discussed below in section \ref{sec:mmultipole}. The
electric potential is given by \cite{jac99}
\begin{equation}
  \label{eq:potential}
  \phi(\vec{r},t) = \frac{1}{4\pi\eps_0} \int_V d^3r'
  \frac{\rho(\vec{r}\,',t)}{|\vec{r} - \vec{r}\,'|},
\end{equation}
where $\rho(\vec{r},t)$ is the charge density, assumed to vanish for
all $t$ outside a bounded volume $V$.  The Coulomb integral
(\ref{eq:potential}) gives the potential $\phi(\vec{r})$ in
electrostatics, and in electrodynamics in the Coulomb gauge with
$\rho(\vec{r},t)$ the instantaneous charge density.  More generally,
(\ref{eq:potential}) holds when retardation effects can be neglected. 

Assuming $|\vec{r}\,|>|\vec{r}\,'|$ for all $\vec{r}\,'$ in $V$, we
can substitute the well-known expansion of $1/|\vec{r} - \vec{r}\,'|$
in spherical harmonics \cite{jac99} in (\ref{eq:potential}) to obtain
the spherical multipole expansion
\begin{equation}
  \label{eq:sph.multipole}
\phi(\vec{r}) = \frac{1}{4\pi\eps_0} \sum_{n=0}^\infty
\frac{4\pi}{2n+1} \frac{1}{|\vec{r}\,|^{n+1}} \sum_{m=-n}^n q_{nm}^*
Y_{nm}(\verr{}), 
\qquad
q_{nm} = \int_V d^3r' \rho(\vec{r}\,') |\vec{r}\,'|^{n}
Y_{nm}(\verr{}\,').
\end{equation}
In this equation and in what follows we omit the temporal coordinate
in $\phi(\vec{r},t)$, $\rho(\vec{r},t)$ and $q_{nm}(t)$ for brevity. 
For $n\geq 0$ fixed, the $2n+1$ quantities $q_{nm}$ are called
``spherical $2^n$-pole moments'' \cite{jac99}.  In the coordinate
representation in quantum mechanics $q_{nm}$ is a rank-$n$ \sito, as
is apparent from the first equality in (\ref{eq:sph.multipole}) since
$Y_{nm}$ is a rank-$n$ \sito\ and $\phi$ is a scalar.  (Notice that
our $q_{nm}$ is the complex conjugate of the one in \cite{jac99}.)
A cartesian multipole expansion, on the other hand, is of the form 
\begin{equation}
  \label{eq:cart.multipole.exp}
\phi(\vec{r}) = \frac{1}{4\pi\eps_0} \sum_{n=0}^\infty \frac{1}{n!} 
\frac{1}{|\vec{r}\,|^{2n+1}} Q_{i_1\ldots i_n} r_{i_1}\ldots r_{i_n},
\end{equation}
where $Q_{i_1\ldots i_n}$ is the cartesian $2^n$-pole tensor. Clearly,
$Q_{i_1\ldots i_n}$ must be a rank-$n$ cartesian tensor, since $\phi$
in (\ref{eq:cart.multipole.exp}) is a scalar, and it must be
completely symmetric since any antisymmetric part would not contribute
to (\ref{eq:cart.multipole.exp}).  It must also be traceless, because
otherwise the traces of $Q_{i_1\ldots i_n}$ would make contributions
of $\mathcal{O}(1/r^{k+1})$ with $k<n$.  Thus, $Q_{i_1\ldots i_n}$ is
a rank-$n$ cartesian irreducible tensor in the classical theory, and a
rank-$n$ \cito\ in the quantum theory.  To obtain
(\ref{eq:cart.multipole.exp}) and the expression for $Q_{i_1\ldots
  i_n}$ from (\ref{eq:potential}), we start from the expansion of
$1/|\vec{r} - \vec{r}\,'|$ in Legendre polynomials \cite{jac99}
written in the form
\begin{equation}
\label{eq:legendre1}
\frac{1}{|\vec{r}-\vec{r}\,'|} = \sum_{n=0}^\infty
\frac{|\vec{r}\,'|^n}{|\vec{r}\,|^{n+1}} P_n(\verr{}\cdot\verrp{}),  
\end{equation}
which we substitute in (\ref{eq:potential}).  By using the expression 
(\ref{eq:legendre}) for the Legendre polynomial $P_n$ we are led to
(\ref{eq:cart.multipole.exp}) with
\begin{equation}
  \label{eq:cart.multipole}
\frac{1}{(2n-1)!!}  Q_{i_1\ldots i_n} = \int_V d^3r' \rho(\vec{r}\,') r'_{(i_1}
  \ldots r'_{i_n)_0} = 
\sum_{m=-n}^n  \veps{(n)i_1\ldots i_n}(m)^* \veps{(n)j_1\ldots j_n}(m)
\int_V d^3r' \rho(\vec{r}\,') r'_{j_1} \ldots r'_{j_n}.
\end{equation}
As is easy to check, for $n=0$, 1, 2, $Q_{i_1\ldots i_n}$ is the total
charge, dipole vector, and quadrupole tensor, respectively, of the
charge distribution $\rho$.  We can make contact with the notation of
\cite{kie98} by substituting 
\begin{equation}
\verr{(i_1}\ldots\verr{i_n)_0} = (-1)^n \frac{r^{n+1}}{(2n-1)!!} 
\partial_{i_1}\ldots\partial_{i_n}\frac{1}{|\vec{r}\,|} 
\end{equation}
in (\ref{eq:cart.multipole}). From
(\ref{eq:cart.multipole}) and (\ref{eq:shyfinal}) we get
\begin{equation}
  \label{eq:multipole.final}
q_{nm} = \frac{1}{\sqrt{4\pi}} \sqrt{\frac{2n+1}{n!(2n-1)!!}}
Q_{i_1\ldots i_n} \veps{(n)i_1\ldots i_n}(m),
\qquad
Q_{i_1\ldots i_n} = \sqrt{4\pi} 
\sqrt{\frac{n!(2n-1)!!}{2n+1}} \sum_{m=-n}^n q_{nm} \veps{(n)i_1\ldots
  i_n}(m)^*. 
\end{equation}
We see that the rank-$n$ spherical and cartesian multipole tensors are
related to each other according to the general results of section
\ref{sec:csitos}, up to a normalization constant.  Equation
(\ref{eq:multipole.final}) reproduces the expressions for $q_{nm}$ in
terms of $Q_{i_1\ldots i_n}$ given in equations (4.4)---(4.6) of
\cite{jac99} (taking into account that our $q_{nm}$ is the complex
conjugate of the one in that reference) for $n=0$, 1, 2, and
generalizes them to any natural $n$, making completely explicit the
analogous relations obtained in \cite{tor02}.  

\section{The Wigner-Eckart theorem for reducible symmetric cartesian
  tensor operators}
\label{sec:totsym}

The results of the previous sections, and in particular the
Wigner--Eckart theorem, can in principle be extended to generic
reducible tensors and tensor operators by decomposing them into their
irreducible components.  In this section we consider the extension to
the case of totally symmetric reducible cartesian tensors.

We introduce a basis of totally symmetric rank-$n$ tensors as the set of
tensors $\veps{\{n,s\}}(m)$ of rank $n$ and spin $s$, with $s=0,$ 2, \ldots
$n$ if $n$ is even and $s=1,$ 3, \ldots $n$ if it is odd, and $-s\leq
m\leq s$, defined as
\begin{subequations}
  \label{eq:tot.sym.def}  
\begin{equation}
  \label{eq:tot.sym.def.a}
  \veps{\{n,s\}i_1\ldots i_n}(m) = \lambda_{(n,s)} \frac{1}{n!} 
  \veps{(s)\{i_1\ldots i_s}(m) \delta_{i_{s+1}i_{s+2}}\ldots
  \delta_{i_{n-1}i_{n}\}}. 
\end{equation}
It is clear from this definition that for $s=n$ the basis tensor is
irreducible and $\veps{\{n,n\}i_1\ldots i_n}(m) = \veps{(n)i_1\ldots
  i_n}(m)$.  Furthermore, if $n$ is even and $s=0$ then
$\veps{(0)}(0)=1$. The normalization constant in
(\ref{eq:tot.sym.def.a}) is given by
\begin{equation}
  \label{eq:tot.sym.def.b}
  \lambda_{(n,s)} = \left( 
    \frac{1}{2^{n_\delta} n_{\delta}!} \frac{n!}{s!}
    \frac{(2s+1)!!}{(n+s+1)!!} \right)^{1/2},
\qquad
n_\delta = \frac{n-s}{2}.
\end{equation}
\end{subequations}
From (\ref{eq:tot.sym.def}) the basis tensors are seen to satisfy the
normalization and complex conjugation relations
\begin{equation}
  \label{eq:tot.sym.props}
  \veps{\{n,s'\}}(m')^*\cdot \veps{\{n,s\}}(m) = \delta_{s's} \delta_{m'm}, 
\qquad
\veps{\{n,s\}}(m)^* = (-1)^m \veps{\{n,s\}}(-m),
\end{equation}
where the dot stands for total index contraction.  The definition
(\ref{eq:tot.sym.def}) is far from arbitrary; from it and
(\ref{eq:std.tens.s.impl}) we obtain the equality
\begin{equation}
  \label{eq:tot.sym.alt.def}
  \veps{\{n,s\}i_1\ldots i_n}(m) = \lambda'_{(n,s)} \partial_{i_1}
  \ldots \partial_{i_n} \left( |\vec{r}|^n Y_{sm}(\verr{}) \right),
\qquad
\lambda'_{(n,s)} = \left( 4\pi \frac{1}{2^{n_\delta} n_{\delta}!} 
\frac{1}{n!(n+s+1)!!}\right)^{1/2},
\end{equation}
valid for $0\leq n-s$ even, generalizing (\ref{eq:std.tens.s.impl}) to the
case of totally symmetric reducible tensors, and which can be taken as
a definition of $\veps{\{n,s\}}$ equivalent to (\ref{eq:tot.sym.def}).
It is clear from (\ref{eq:tot.sym.def}) that for $n$ fixed there are
$(n+1)(n+2)/2$ basis tensors, spanning the subspace of totally
symmetric tensors.  It is straightforward to show from
(\ref{eq:c.tensors}), (\ref{eq:s.mat.prop.b}) that the basis tensors
$\veps{\{n,s\}}(m)$ are eigenfunctions of $\vec{S}_{(n)}^2$ and
$\ver{z}\cdot\vec{S}_{(n)}$ with eigenvalues $s$ and $m$.

Given any totally symmetric rank-$n$ complex tensor $A_{i_1\ldots
  i_n}$ we have the expansion
\begin{equation}
  \label{eq:tot.sym.expn}
  A_{i_1\ldots i_n}= \sum_{\substack{s=0\\(n-s)\mathrm{even}}}^n
  \lambda_{(n,s)} \sum_{m=-s}^s \left(\rule{0pt}{11pt}
    \mathrm{Tr}_{(s)}(A)\cdot\veps{(s)}(m)\right)
  \veps{\{n,s\}i_1\ldots i_n}(m)^*,
\quad
  \mathrm{Tr}_{(s)}(A)_{i_1\ldots i_s}\equiv A_{i_1\ldots
    i_sk_1k_1\ldots k_{n_\delta}k_{n_\delta}}, 
\end{equation}
with $n_\delta$ as in (\ref{eq:tot.sym.def.b}).  As a particular case
of (\ref{eq:tot.sym.expn}) we have
\begin{equation}
  \label{eq:shy.inverse}
  \verr{i_1}\ldots \verr{i_n} =
  \sum_{\substack{s=0\\(n-s)\mathrm{even}}}^n
  \lambda'_{(n,s)} \sum_{m=-s}^s Y_{sm}(\verr{})
  \veps{\{n,s\}i_1\ldots i_n}(m)^*,
\end{equation}
with $\lambda'_{(n,s)}$ as in (\ref{eq:tot.sym.alt.def}), which is the
relation inverse to (\ref{eq:shyfinal}).  From the expansion
(\ref{eq:tot.sym.expn}) and the Wigner--Eckart theorem for \citos\ we
obtain
\begin{thm*}[Wigner--Eckart theorem for totally symmetric tensor operators]
Let $O_{i_1\ldots i_n}$ be a totally symmetric cartesian tensor
operator relative to $\vec{J}$.  Then, its matrix elements are given
by  
\begin{equation}
  \label{eq:tot.sym.we}
 \langle j',\mu'| O_{i_1\dots i_n} |j,\mu\rangle =
\sum_{\substack{s=0\\(n-s)\,\mathrm{even}}}^n
  \lambda_{(n,s)} 
\langle j' || \veps{(s)}\cdot \mathrm{Tr}_{(s)}(O)||j\rangle
\sum_{m=-s}^s 
\CG{j}{\mu}{s}{m}{j'}{\mu'} \veps{\{n,s\}i_1\ldots i_n}(m)^*. 
\end{equation}
\end{thm*}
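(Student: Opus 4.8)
The plan is to combine the expansion \eqref{eq:tot.sym.expn} of a totally symmetric tensor with the Wigner--Eckart theorem for \citos\ established in Section~\ref{sec:we}. The key observation is that the expansion \eqref{eq:tot.sym.expn} is purely algebraic: it holds for any totally symmetric complex tensor, and hence applies verbatim when the tensor components are replaced by operators. Let me think about how the pieces fit together and what the obstacles are.

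Let me work through the logic. We have a totally symmetric cartesian tensor operator $O_{i_1\ldots i_n}$ relative to $\vec{J}$. By \eqref{eq:tot.sym.expn}, applied with $A = O$ (operator-valued), we can write $O_{i_1\ldots i_n}$ as a sum over $s$ (with $n-s$ even) and $m$ of a coefficient operator $\mathrm{Tr}_{(s)}(O)\cdot\veps{(s)}(m)$ times the c-number tensor $\veps{\{n,s\}i_1\ldots i_n}(m)^*$.

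Now I need to understand the coefficient operators. The quantity $\mathrm{Tr}_{(s)}(O)\cdot\veps{(s)}(m)$ is a contraction of $\veps{(s)}(m)$ with the $s$-fold trace $\mathrm{Tr}_{(s)}(O)$ of $O$. The trace $\mathrm{Tr}_{(s)}(O)_{i_1\ldots i_s} = O_{i_1\ldots i_s k_1 k_1 \ldots k_{n_\delta} k_{n_\delta}}$ is itself a rank-$s$ cartesian tensor operator relative to $\vec{J}$ — contracting pairs of indices with $\delta$'s preserves the tensor-operator commutation relation \eqref{eq:ctensor}, since $\delta_{kk'}$ is rotationally invariant. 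Then by Lemma~\ref{lem:sito.from.cito}, $\veps{(s)i_1\ldots i_s}(m)\,\mathrm{Tr}_{(s)}(O)_{i_1\ldots i_s}$ is a rank-$s$ \sito\ relative to $\vec{J}$. Crucially, this is exactly the coefficient operator appearing in \eqref{eq:tot.sym.expn}.

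So here is the structure of the proof I would write. First I would argue that $\mathrm{Tr}_{(s)}(O)$ is a rank-$s$ cartesian tensor operator and hence, by Lemma~\ref{lem:sito.from.cito}, that $O_{sm} := \veps{(s)}(m)\cdot\mathrm{Tr}_{(s)}(O)$ is a rank-$s$ \sito. Then I would take the matrix element $\langle j',\mu'|O_{i_1\ldots i_n}|j,\mu\rangle$ and insert the expansion \eqref{eq:tot.sym.expn}. Since the $\veps{\{n,s\}i_1\ldots i_n}(m)^*$ are c-numbers, they pull out of the matrix element, leaving $\langle j',\mu'| O_{sm} |j,\mu\rangle$ inside. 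Applying the ordinary Wigner--Eckart theorem \eqref{eq:WE.sito} to each \sito\ $O_{sm}$ gives $\langle j',\mu'|O_{sm}|j,\mu\rangle = \langle j'\|O_{(s)}\|j\rangle \CG{j}{\mu}{s}{m}{j'}{\mu'}$, and identifying the reduced matrix element $\langle j'\|O_{(s)}\|j\rangle$ with $\langle j'\|\veps{(s)}\cdot\mathrm{Tr}_{(s)}(O)\|j\rangle$ yields exactly \eqref{eq:tot.sym.we}.

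I expect the only substantive point — and the main thing to verify carefully — to be that $\mathrm{Tr}_{(s)}(O)$ is genuinely a cartesian tensor operator, so that Lemma~\ref{lem:sito.from.cito} applies and produces the right \sito. This is where the total symmetry of $O$ is used: the $s$-fold trace is well-defined independently of which index pairs are contracted precisely because $O$ is symmetric, and the invariance of $\delta_{ij}$ under the rotation generators guarantees that the commutator $[J_k,\mathrm{Tr}_{(s)}(O)_{i_1\ldots i_s}]$ again has the tensor-operator form \eqref{eq:ctensor} in the $s$ free indices. Everything else is bookkeeping: the expansion \eqref{eq:tot.sym.expn} does the algebraic heavy lifting, and the Wigner--Eckart theorem for \citos\ supplies the $\mu,\mu'$ dependence term by term. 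The result is thus essentially immediate once the expansion and the earlier theorem are in hand, which is why I would keep the proof short and concentrate the exposition on the tensor-operator character of the traces.
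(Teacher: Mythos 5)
Your proposal is correct and follows essentially the same route as the paper, which derives the theorem directly from the expansion (\ref{eq:tot.sym.expn}) applied to the operator together with the Wigner--Eckart theorem of section \ref{sec:we}, the two versions (\sito\ via Lemma \ref{lem:sito.from.cito} versus \cito) being interchangeable through the duality $\langle j'||O_{n}||j\rangle = \langle j'||\veps{(n)}\cdot O||j\rangle$ noted there. Your verification that $\mathrm{Tr}_{(s)}(O)$ is itself a cartesian tensor operator, so that the coefficient operators $\veps{(s)}(m)\cdot\mathrm{Tr}_{(s)}(O)$ are \sitos, is exactly the detail the paper leaves implicit, and it lands precisely on the reduced matrix elements appearing in (\ref{eq:tot.sym.we}).
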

Among the most commonly--occurring totally symmetric tensor operators
we have $r_{i_1}\ldots r_{i_n}$ and $p_{i_1}\ldots p_{i_n}$. The
former is the simplest possible example, since its traces are trivial
to compute.  We have
\begin{equation*}
 \langle \ell',\mu'| r_{i_1}\ldots r_{i_n} |\ell,\mu\rangle = 
|\vec{r}|^n \sum_{\substack{s=0\\(n-s)\,\mathrm{even}}}^n
  \lambda_{(n,s)} 
\langle\ell' || \veps{(s)k_1\ldots k_s}
\verr{k_1}\ldots\verr{k_s}||\ell\rangle 
\sum_{m=-s}^s 
\CG{\ell}{\mu}{s}{m}{\ell'}{\mu'}\veps{\{n,s\}i_1\ldots i_n}(m)^*, 
\end{equation*}
with the reduced matrix element given by (\ref{eq:shy.reduced}).  From
this matrix element we obtain a related result involving spherical
harmonics 
\begin{equation*}
  \begin{split}
  \sum_{\mu',\mu} Y_{\ell'\mu'}(\verqp{}) 
 \langle \ell',\mu'| \verr{i_1}\ldots \verr{i_n} |\ell,\mu\rangle 
Y_{\ell\mu}(\verq{})^* =
\sum_{\substack{s=0\\(n-s)\,\mathrm{even}}}^n
(-1)^{\ell'-s} \sqrt{\frac{2\ell'+1}{2s+1}} \lambda_{(n,s)} 
\langle\ell' || \veps{(s)k_1\ldots k_s}
\verr{k_1}\ldots\verr{k_s}||\ell\rangle 
\\\times
\sum_{m=-s}^s 
\veps{\{n,s\}i_1\ldots i_n}(m)^*
Y^{\ell\ell'}_{sm}(\verq{},\verqp{}). 
  \end{split}
\end{equation*}
Notice that the left-hand side of this equality is just the matrix
element $\langle\verqp{}|\P_{\ell'} \verr{i_1}\ldots
\verr{i_n}\P_{\ell}|\verq{}\rangle$, with $\P_\ell$ the
angular-momentum projector operator \cite{bou1}.  Setting
$\verqp{}=\verq{}$ in this last equality and using
(\ref{eq:bipolar.local}) on its right-hand side, we obtain the
expansion in spherical harmonics of $\langle\verq{}|\P_{\ell'}
\verr{i_1}\ldots \verr{i_n}\P_{\ell}|\verq{}\rangle$.  This simple
example illustrates the power of our approach; the case of tensor
powers of the momentum operator, $p_{i_1}\ldots p_{i_n}$, is discussed
in detail in the following section.

\section{Derivatives of spherical harmonics to all orders}
\label{sec:dershy}

The gradient of $Y_{\ell m}(\verr{})$ can be expressed as a linear
combination of vector spherical harmonics, a widely known result
going back to \cite{bet33} and now standard textbook material
\cite{gal90,var88,edm96}.  In this section we generalize that result
to the derivatives of $Y_{\ell m}(\verr{})$ of all orders, expressed
in terms of tensor spherical harmonics, as an application of
(\ref{eq:tot.sym.we}).  For that purpose, we would like to write
equalities of the form 
\begin{equation*}
  \partial_{i_1}\ldots\partial_{i_n} Y_{\ell m}(\verr{}) \neq
  \langle \ver{r} | \partial_{i_1}\ldots\partial_{i_n} | \ell,
  m\rangle \neq
  \sum_{\ell',m'} Y_{\ell' m'}(\verr{}) \langle \ell', m' |
  \partial_{i_1}\ldots\partial_{i_n} | \ell, m\rangle.
\end{equation*}
The reason why we cannot use equal signs in these relations is that,
whereas $Y_{\ell m}(\verr{})$ is independent of $|\vec{r}\,|$,
$\partial_i Y_{\ell m}(\verr{})$ and higher derivatives are not.  We
remark, however, that $|\vec{r}\,|^n \partial_{i_1}\ldots$
$\partial_{i_n}Y_{\ell m}(\verr{})$ and
$(|\vec{r}\,|\partial_{i_1})\ldots(|\vec{r}\,|\partial_{i_n})Y_{\ell 
  m}(\verr{})$ do not depend on $|\vec{r}\,|$. We are thus led to
define the operator
\begin{equation}
  \label{eq:radial.trick}
\O_{(n)i_1\ldots i_n}=\left(-(n-1) \verr{i_n}+
  |\vec{r}\,|\partial_{i_n}\right) \left(-(n-2) \verr{i_{n-1}}+
  |\vec{r}\,|\partial_{i_{n-1}}\right) \ldots \left(-\verr{i_2}+
  |\vec{r}\,|\partial_{i_2}\right)
\left(|\vec{r}\,|\partial_{i_1}\right),
\end{equation}
satisfying the relation 
\begin{equation}
  \label{eq:radial.trick.2}
|\vec{r}\,|^n
\partial_{i_1}\ldots \partial_{i_n} f(\vec{r}\,)=
\O_{(n)i_1\ldots i_n} f(\vec{r}\,),
\end{equation}
which is straightforward to prove by induction.  If we set 
$f(\vec{r}\,)=Y_{\ell m}(\verr{})$ in (\ref{eq:radial.trick.2}), then
both sides are independent of $|\vec{r}\,|$.  We can therefore write
\begin{equation}
  \label{eq:nder.1}
|\vec{r}|^n \partial_{i_1}\ldots\partial_{i_n} Y_{\ell m}(\verr{}) =
\langle \verr{}|\O_{(n)i_1\ldots i_n}|\ell,m\rangle
=\sum_{\ell',m'} Y_{\ell'm'}(\verr{}) 
\langle\ell',m'|\O_{(n)i_1\ldots i_n}|\ell,m\rangle.
\end{equation}
Since $\O_{(n)i_1\ldots i_n}$ is totally symmetric by
(\ref{eq:radial.trick.2}), we can apply the theorem
(\ref{eq:tot.sym.we}) to the right--hand side of (\ref{eq:nder.1}).
By using (\ref{eq:tot.sym.we}) and (\ref{eq:tshy}), from
(\ref{eq:nder.1}) we get
\begin{equation}
  \label{eq:nder.2}
  \begin{split}
|\vec{r}|^n \partial_{i_1}\ldots\partial_{i_n} Y_{\ell m}(\verr{}) =
\sum_{\ell'=\ell-n}^{\ell+n}   
\sum_{\substack{s=0\\(n-s)\,\mathrm{even}}}^n
\sqrt{\frac{2\ell'+1}{2\ell+1}} (-1)^{\ell'-\ell}
\frac{\lambda_{(n,s)}^2}{n!} 
\langle\ell'||\veps{(s)}\cdot \mathrm{Tr}_{(s)}(\O_{(n)})||\ell\rangle 
\\\times
\left(Y^{\ell's}_{\ell m}(\verr{})
\right)_{\{i_1\ldots i_s} 
\delta_{i_{s+1}i_{s+2}}\ldots \delta_{i_{n-1}i_n\}},
  \end{split}
\end{equation}
where we assume $\ell\geq n$, $\lambda_{(n,s)}$ is defined in
(\ref{eq:tot.sym.def.b}) and $\mathrm{Tr}_{(s)}$in
(\ref{eq:tot.sym.expn}).  In order to make this expression more
explicit we need to evaluate the traces.  We do so by going back to
(\ref{eq:radial.trick.2}), with $f=Y_{\ell m}$, and using the fact
that $Y_{\ell m}$ is an eigenfunction of the Laplace operator, to
get\footnote{We   define double factorials for odd integers $n=2k-1$
  as $n!! = \prod_{h=1}^k (2h-1)$ and for even
  integers $n=2k$ as $n!! = \prod_{h=1}^k (2h)$ so that
  $n!=n!!(n-1)!!$.}
\begin{subequations}
  \label{eq:nder.34}
\begin{equation}
  \label{eq:nder.3}
|\vec{r}|^n \partial_{i_n}\ldots\partial_{i_{2k+1}}
\partial_{h_k}\partial_{h_k}\ldots \partial_{h_1}\partial_{h_1}
Y_{\ell m}(\verr{}) =    
(-1)^k \ell (\ell+1) \frac{(\ell-1)!!}{(\ell-2k+1)!!}
\frac{(\ell+2k-2)!!}{\ell!!} \O_{(n,2k+1)i_n\ldots i_{2k+1}}
Y_{\ell m}(\verr{}),
\end{equation}
where $\O_{(n,q)i_n\ldots i_{q}}$ is the rank-$(n-q+1)$ tensor operator
defined as
\begin{equation}
  \label{eq:nder.4}
\O_{(n,q)i_n\ldots i_{q}} = 
\left(-(n-1) \verr{i_n}+
  |\vec{r}\,|\partial_{i_n}\right) \left(-(n-2) \verr{i_{n-1}}+
  |\vec{r}\,|\partial_{i_{n-1}}\right) \ldots 
\left(-(q-1)\verr{i_q}+
  |\vec{r}\,|\partial_{i_q}\right).
\end{equation}
\end{subequations}
The operator $\O_{(n,q)}$ is a generalization of $\O_{(n)}$ defined in
(\ref{eq:radial.trick}), with $\O_{(n,1)} = \O_{(n)}$.  From
(\ref{eq:nder.34}) we obtain the sought--for traces of
$\O_{(n)}$ as
\begin{equation}
  \label{eq:nder.5}
\mathrm{Tr}_{(s)}(\O_{(n)})_{i_1\ldots i_s} = 
(-1)^{(n-s)/2} \ell (\ell+1) \frac{(\ell-1)!!}{(\ell-n+s+1)!!} 
\frac{(\ell+n-s-2)!!}{\ell!!} \O_{(n,n-s+1)i_1\ldots i_s},
\end{equation}
with $\mathrm{Tr}_{(s)}$ as defined in (\ref{eq:tot.sym.expn}). 
The reduced matrix elements of $\O_{(n,q)}$ appearing in
(\ref{eq:nder.2}) through the relation (\ref{eq:nder.5}) can be
evaluated with (\ref{eq:std.tens.upper.lower}) by means of the usual
recoupling techniques \cite{edm96,var88}.  We omit the algebra for
brevity and state here the result
\begin{equation}
  \label{eq:nder.6}
\langle \ell'||\veps{r}\cdot \O_{(n,q)}|| \ell\rangle =
(-1)^{(\ell'-\ell+r)/2} \sqrt{\frac{r!}{(2r-1)!!}}\sqrt{\frac{2\ell+1}
{2\ell'+1}} \frac{(\ell-q+2)!!}{(\ell'-n+1)!!} 
\frac{(\ell'+n-2)!!}{(\ell+q-3)!!} \CG{\ell}{0}{r}{0}{\ell'}{0},
\end{equation}
with $r=n-q+1$ the rank of $\O_{(n,q)}$.  Notice that the
CG coefficient on the right--hand side vanishes unless
$\ell+r-\ell'$ is even.

Putting together (\ref{eq:nder.2}), (\ref{eq:nder.5}) and
(\ref{eq:nder.6}) finally leads to
\begin{equation}
  \label{eq:nder.7}
  \begin{gathered}
|\vec{r}|^n \partial_{i_1}\ldots\partial_{i_n} Y_{\ell m}(\verr{}) =
\sum_{\ell'=|\ell-n|}^{\ell+n} 
\sum_{\substack{s=0\\(n-s)\,\mathrm{even}}}^n
\Theta_{ns}(\ell,\ell') \CG{\ell}{0}{s}{0}{\ell'}{0}
\left(Y^{\ell's}_{\ell m}(\verr{})
\right)_{\{i_1\ldots i_s} 
\delta_{i_{s+1}i_{s+2}}\ldots \delta_{i_{n-1}i_n\}},
\\
\Theta_{ns}(\ell,\ell') = (-1)^{\frac{\ell-\ell'+s}{2}} (-1)^{n_\delta}
\frac{2s+1}{2^{n_\delta}n_\delta! (n+s+1)!!}
\sqrt{\frac{(2s-1)!!}{s!}} \frac{(\ell+1)!!}{(\ell-2)!!} 
\frac{(\ell'+n-2)!!}{(\ell'-n+1)!!},\quad n_\delta=\frac{n-s}{2}.
  \end{gathered}
\end{equation}
To the best of our knowledge, this general result has not been given
in the previous literature. It is certainly not to be found in the
references listed in the bibliography.

Equation (\ref{eq:nder.7}) gives an explicit expression for the
$n^\mathrm{th}$ derivatives of $Y_{\ell m}$ in terms of tensor
spherical harmonics.  For each $s$, $\ell$, $m$, equation
(\ref{eq:nder.7}) can be inverted to give $Y^{\ell's}_{\ell
  m}(\verr{})$ as a linear combination of derivatives of $Y_{\ell
  m}(\verr{})$.  Those relations can, in fact, be taken as the
definition of tensor spherical harmonics.  For instance, in the case
$n=1$ from (\ref{eq:nder.7}) and (\ref{eq:NMC}) we get
\begin{equation}
\label{eq:nder.8}
    \begin{gathered}
Y^{(\ell-1)1}_{\ell m}(\verr{}) = \frac{|\vec{r}\,|^{-(\ell-1)}}{\sqrt{\ell(2\ell+1)}} 
\nabla\left(|\vec{r}\,|^\ell Y_{\ell m}(\verr{})\right),
\qquad
Y^{\ell1}_{\ell m}(\verr{}) = -\frac{i}{\sqrt{\ell(\ell+1)}}
\vec{r}\wedge\nabla Y_{\ell m}(\verr{}),\\
Y^{(\ell+1)1}_{\ell m}(\verr{}) = \frac{|\vec{r}\,|^{\ell+2}}{\sqrt{(\ell+1)(2\ell+1)}} 
\nabla\left(
|\vec{r}\,|^{-(\ell+1)} Y_{\ell m}(\verr{})\right).
    \end{gathered}
\end{equation}
which allows us to make contact with vector spherical harmonics as
defined in electromagnetism \cite{jac99,hil54}.

We mention, finally, that from (\ref{eq:nder.7}) we can obtain an
expression for the matrix elements of momentum operators $\langle
\ell',m'| p_{i_1}\ldots p_{i_n}|\ell,m\rangle$.  Indeed, multiplying
(\ref{eq:nder.7}) by $Y_{\ell'm'}(\verr{})^*$ and integrating over the
unit sphere, with the help of (\ref{eq:tshy}) we get
\begin{equation}
  \label{eq:matpn}
  \begin{aligned}
  \int\! d^2\verr{}\;
  Y_{\ell'm'}(\verr{})^* \partial_{i_1}\ldots\partial_{i_n} Y_{\ell
    m}(\verr{}) = |\vec{r}|^{-n}
\sum_{\substack{s=0\\(n-s)\,\mathrm{even}}}^n
\Theta_{ns}(\ell,\ell') \frac{n!}{\lambda_{(n,s)}}
&\CG{\ell}{0}{s}{0}{\ell'}{0}
\CG{\ell'}{m'}{s}{m-m'}{\ell}{m}\\
&\times\veps{\{n,s\}i_1\ldots i_n}(m-m'),
  \end{aligned}
\end{equation}
with $\Theta_{ns}$ as defined in (\ref{eq:nder.7}) and
$\lambda_{(n,s)}$ in (\ref{eq:tot.sym.def.b}). 

\section{The Wigner-Eckart theorem for partially irreducible tensors}
\label{sec:parirr}

The last class of reducible tensors we consider is that of partially
irreducible tensors, defined as those tensors $T_{i_1\ldots i_{n+1}}$
of rank $n+1$ totally symmetric and traceless in their first $n$
indices.  They play a r\^ole in the cartesian multipole expansion of
the electromagnetic vector potential discussed in section
\ref{sec:mmultipole}.  A rank-$n+1$ partially irreducible tensor
possesses $6n+3$ independent components: $2n+3$ corresponding to
the irreducible part of spin $s=n+1$, $2n+1$ to the parts
antisymmetric in some pair $i_ki_{n+1}$ ($1\leq k\leq n$) of spin
$s=n$, and $2n-1$ to the trace parts of spin $s=n-1$.  A basis of the
space of rank-$n+1$ partially irreducible tensor is then given by
\begin{equation}
  \label{eq:partir1}
  B^{(j)}_{i_1\ldots i_{n+1}}(m) =
  \sum_{\mu,\nu} \CG{n}{\mu}{1}{\nu}{j}{m} \veps{(n)i_1\ldots
    i_{n}}(\mu)\veps{(1)i_{n+1}}(\nu),
\qquad j=n-1,n,n+1,
\qquad -j \leq m \leq j.
\end{equation}
From (\ref{eq:std.tens.s.prop}) and standard properties of the
Clebsch-Gordan coefficients the basis tensors (\ref{eq:partir1}) are
found to satisfy the orthonormality and complex conjugation relations
\begin{equation}
  \label{eq:partir2}
  B^{(j)}_{i_1\ldots i_{n+1}}(m)^* B^{(j')}_{i_1\ldots i_{n+1}}(m') =
  \delta_{jj'}\delta_{mm'}, 
\qquad
B^{(j)}_{i_1\ldots i_{n+1}}(m)^* = (-1)^{n+1-j} (-1)^m
B^{(j)}_{i_1\ldots i_{n+1}}(-m). 
\end{equation}
In order to obtain matrix elements of partially irreducible tensor
operators we need to express the reducible rank-$n+1$ basis tensors 
(\ref{eq:partir1}) as (linear combinations of) direct products of an
irreducible tensor of rank $s$, with $s=n+1,$ $n$, $n-1$, with a
rank-($n+1-s$) spin-0 (i.e., isotropic) tensor:
\begin{subequations}
  \label{eq:partir3}
\begin{align}
  \label{eq:partir3a}
B^{(n+1)}_{i_1\ldots i_{n+1}}(m)  &= \veps{(n+1)i_1\ldots i_{n+1}},\\
  \label{eq:partir3b}
B^{(n)}_{i_1\ldots i_{n+1}}(m) &= \frac{i}{\sqrt{n(n+1)}} \sum_{k=1}^n
\varepsilon_{i_ki_{n+1}h} \veps{(n)hi_1\ldots \widehat{i}_{k}\ldots
  i_n},\\
  \label{eq:partir3c}
B^{(n-1)}_{i_1\ldots i_{n+1}}(m) &= \frac{1}{n\sqrt{(2n-1)(2n+1)}}
\left( 2
\sum_{1\leq k<h\leq n}\veps{(n-1)i_1\ldots\widehat{i}_k\ldots
  \widehat{i}_h \ldots i_{n+1}}\delta_{i_ki_h} \right.\\
&\quad - (2n-1)\left.
\sum_{k=1}^n \veps{(n-1)i_1\ldots\widehat{i}_k\ldots i_n} \delta_{i_ki_{n+1}}
\right),\nonumber
\end{align}
\end{subequations}
where the caret over a subscript indicates that it is to be
omitted. Equation (\ref{eq:partir3a}) is just (\ref{eq:std.tens.s});
proofs of (\ref{eq:partir3b}), (\ref{eq:partir3c}) as direct
consequences of (\ref{eq:nder.8}) are given in Appendix
\ref{sec:appD}.

Given a partially irreducible tensor $T_{i_1\ldots i_{n+1}}$, from
(\ref{eq:partir1})-(\ref{eq:partir3}) we have,
\begin{equation}
  \label{eq:partir4} 
T_{i_1\ldots i_{n+1}} = \sum_{j=n-1}^{n+1}\sum_{m=-j}^j T^{(j)}(m)
B^{(j)}_{i_1\ldots i_{n+1}}(m)^*,
\end{equation}
with
\begin{subequations}
  \label{eq:partir5} 
\begin{align}
  \label{eq:partir5a}
T^{(n+1)}(m) &= B^{(n+1)}_{i_1\ldots i_{n+1}}(m) T_{i_1\ldots i_{n+1}} 
  = \veps{(n+1)i_1\ldots i_{n+1}}(m) T_{i_1\ldots i_{n+1}},\\
  \label{eq:partir5b}
T^{(n)}(m) &= B^{(n)}_{i_1\ldots i_{n+1}}(m) T_{i_1\ldots i_{n+1}} 
  = i\sqrt{\frac{n}{n+1}} \veps{(n)i_1\ldots
    i_{n-1}h}(m)\varepsilon_{hi_ni_{n+1}} T_{i_1\ldots i_{n+1}},\\
  \label{eq:partir5c}
T^{(n-1)}(m) &= B^{(n-1)}_{i_1\ldots i_{n+1}}(m) T_{i_1\ldots i_{n+1}} 
  = -\sqrt{\frac{2n-1}{2n+1}}\veps{(n-1)i_1\ldots
    i_{n-1}}(m) T_{i_1\ldots i_{n-1}jj}.
\end{align}
\end{subequations}
Notice that, due to the symmetry properties of $T_{i_1\ldots
  i_{n+1}}$, the tensor contraction on the r.h.s.\ of
(\ref{eq:partir5b}) can also be written as $\veps{(n)i_1\ldots
  i_{k-1}hi_{k+1}\ldots i_{n}}(m)\varepsilon_{hi_ki_{n+1}}
T_{i_1\ldots i_{n+1}}$ with $1\leq k \leq n$.  Similarly, the tensor
contraction on the r.h.s.\ of (\ref{eq:partir5c}) can be written as 
$\veps{(n-1)i_1\ldots\widehat{i}_k\ldots i_{n}}(m) T_{i_1\ldots
  i_{k-1}ji_{k+1}\ldots i_nj}$ with $1\leq k \leq n$. The forms used
in (\ref{eq:partir5}) were chosen for notational convenience.

The
components $T^{(j)}(m)$ defined in (\ref{eq:partir5}) are \sitos, by
Corollary \ref{crl:sito.from.cito}, so from (\ref{eq:partir4}) and the
Wigner-Eckart theorem (\ref{eq:WE.sito}) we
obtain the equality
\begin{equation}
  \label{eq:partir6}
\langle j',m'|T_{i_1\ldots i_{n+1}}|j,m\rangle =
\sum_{s=n-1}^{n+1} \langle j'||T^{(s)}||j\rangle
\sum_{s_z=-s}^s \CG{j}{m}{s}{s_z}{j'}{m'} B^{(s)}_{i_1\ldots
  i_{n+1}}(s_z)^*, 
\end{equation}
which is the Wigner-Eckart theorem for partially irreducible cartesian
tensor operators.

\subsection{Magnetic multipole moments}
\label{sec:mmultipole}

In this section we derive the spherical and cartesian multipole
expansions for the magnetic vector potential, and the relation between
them to all orders.  We restrict ourselves here to the case where
time-retardation effects can be neglected.  In that case 
the magnetic potential is given by \cite{jac99} 
\begin{equation}
  \label{eq:mag1}
  A_k(\vec{r},t) = \frac{\mu_0}{4\pi}
\int_V d^3r' \frac{j_k(\vec{r}\,',t)}{|\vec{r} - \vec{r}\,'|}
=\frac{\mu_0}{4\pi} \int_Vd^3r'\sum_{\ell=0}^\infty
\frac{4\pi}{2\ell+1}\frac{|\vec{r}\,'|^\ell}{|\vec{r}|^{\ell+1}}
\sum_{\mu=-\ell}^\ell Y_{\ell\mu}(\verr{}\,')^* Y_{\ell\mu}(\verr{}) j_k(\vec{r}\,',t),
\end{equation}
where in the last equality we substituted the expansion of $1/|\vec{r}
- \vec{r}\,'|$ in spherical harmonics \cite{jac99}.  In
(\ref{eq:mag1}) we can rewrite
\begin{subequations}
\label{eq:mag2}
\begin{equation}
\label{eq:mag2a}
\sum_{\mu=-\ell}^\ell Y_{\ell\mu}(\verr{}\,')^* Y_{\ell\mu}(\verr{})
j_k(\vec{r}\,',t)
=\sum_{\mu=-\ell}^\ell \sum_{\mu'=-\ell}^\ell Y_{\ell\mu'}(\verr{}\,')^* Y_{\ell\mu}(\verr{}) j_i(\vec{r}\,',t)  \delta_{ik}\delta_{\mu\mu'},
\end{equation}
with
\begin{equation}
\label{eq:mag2b}
  \begin{aligned}
\delta_{ik}\delta_{\mu\mu'} &= \sum_{\nu=-1}^1
\veps{(1)i}(\nu)^* \veps{(1)k}(\nu) \delta_{\mu\mu'}
= \sum_{\nu=-1}^1 \sum_{\nu'=-1}^1\veps{(1)i}(\nu')^* \veps{(1)k}(\nu) 
\delta_{\mu\mu'} \delta_{\nu\nu'}\\
&=
\sum_{j=|\ell-1|}^{\ell+1}\sum_{m=-j}^j\sum_{\nu,\nu'=-1}^1 
\veps{(1)i}(\nu')^* \veps{(1)k}(\nu) 
\CG{\ell}{\mu'}{1}{\nu'}{j}{m}\CG{\ell}{\mu}{1}{\nu}{j}{m}.
  \end{aligned}
\end{equation}
\end{subequations}
By inserting (\ref{eq:mag2b}) in (\ref{eq:mag2a}) and the result in
(\ref{eq:mag1}) we obtain
\begin{equation}
  \label{eq:mag3}
\vec{A}(\vec{r},t) = \frac{\mu_0}{4\pi} \sum_{\ell=0}^\infty 
\frac{4\pi}{2\ell+1}\frac{1}{|\vec{r}|^{\ell+1}} 
\sum_{j=|\ell-1|}^{\ell+1}\sum_{m=-j}^j \mmp{\ell}{j}{m}(t)^* Y^{\ell
  1}_{jm}(\verr{}), 
\quad
\mmp{\ell}{j}{m}(t) = \int_Vd^3r'|\vec{r}\,'|^{\ell}
\vj(\vec{r}\,',t)\cdot Y^{\ell 1}_{jm}(\verr{}\,').
\end{equation}
This expression is a spherical multipole expansion, but it must be
simplified since it contains redundant terms. We consider first the
term in (\ref{eq:mag3}) with $j=\ell-1$ ($\ell\geq1$).  From the third
equality in (\ref{eq:nder.8}) this term is seen to be a gradient that
does not contribute to the magnetic field.  Furthermore, since the
argument of the gradient is a harmonic function, it is divergenceless.
This term plays a r\^ole in setting boundary conditions such as
$\vec{A}(R,\theta,\varphi) = \vec{v}(\theta,\varphi)$ or
$\verr{}\cdot\vec{A}(R,\theta,\varphi) = 0$ for some finite value
$r=R$.  We restrict ourselves here to the case $\vec{A}\rightarrow0$
as $r\rightarrow\infty$, so we will set $\mmp{\ell}{(\ell-1)}{m}=0$
from here on.

The term with $j=\ell+1$ ($\ell\ge0$) in (\ref{eq:mag3}) is related to
the time derivative of the electric multipole moment $q_{(\ell+1)m}$
given in (\ref{eq:sph.multipole}).  To show this we adopt the method
used in \cite{kie98}.  We consider a closed surface $S$ with exterior
normal $\ver{n}$ containing the volume $V$ and its boundary in its
interior, so that $\vj=0$ on $S$.  By means of Gauss' divergence
theorem and the continuity equation $\partial_k \vj_k + \dot{\rho}=0$
we obtain 
\begin{equation*}
0 = \oint_{S}d^2r' \ver{n}_k \vj_k(\verr{}\,') r'_{i_1} \ldots
r'_{i_n} = \int_V d^3r' \left(
\vj_k(\verr{}\,') \partial'_k (r'_{i_1} \ldots r'_{i_n}) -
\dot{\rho}(\verr{}\,') r'_{i_1} \ldots r'_{i_n}
\right).
\end{equation*}
Multiplication of this equation by $\veps{(n)i_1\cdots i_n}(m)$ and
use of (\ref{eq:shyfinal}) leads to
\begin{equation}
  \label{eq:mag4}
\int_V d^3r' \vj_k(\verr{}\,') \partial'_k \left( |\vec{r}\,'|^n
  Y_{nm}(\verr{}\,')\right)   = \dot{q}_{nm},
\end{equation}
with $q_{nm}$ the electric multipole moments defined in
(\ref{eq:sph.multipole}).  The gradient in the integrand in
(\ref{eq:mag4}) is given by the first equality in (\ref{eq:nder.8}),
which yields 
\begin{equation}
  \label{eq:mag6}
  \sqrt{(\ell+1)(2\ell+3)} \mmp{\ell}{(\ell+1)}{m} =
  \dot{q}_{(\ell+1)m}. 
\end{equation}
We then have
\begin{equation}
  \label{eq:mag7}
  \begin{aligned}
    \vec{A}(\vec{r},t) &= \frac{\mu_0}{4\pi} \sum_{\ell=1}^\infty 
    \frac{4\pi}{2\ell+1} \frac{1}{|\vec{r}|^{\ell+1}}
    \sum_{m=-\ell}^\ell \mmp{\ell}{\ell}{m}(t)^* Y^{\ell 1}_{\ell
      m}(\verr{}) \\
&+\frac{\mu_0}{4\pi} \sum_{\ell=0}^\infty 
    \frac{4\pi}{2\ell+1} \frac{1}{\sqrt{(\ell+1)(2\ell+3)}}
    \frac{1}{|\vec{r}|^{\ell+1}}
    \sum_{m=-(\ell+1)}^{\ell+1} \dot{q}_{(\ell+1)m}(t)^* Y^{\ell 1}_{(\ell+1)
      m}(\verr{}),
  \end{aligned}
\end{equation}
with $\mmp{\ell}{\ell}{m}$, $q_{(\ell+1)m}$ as defined in
(\ref{eq:mag3}), (\ref{eq:sph.multipole}), respectively.  Equation
(\ref{eq:mag7}) is the spherical multipole expansion of the vector
potential.  We notice that in the magnetostatic case $\partial_k j_k =
0 = \dot{\rho}$, only the first line in (\ref{eq:mag7}) is
non-vanishing.  In that case (\ref{eq:mag7}) yields a vector potential
that is transverse both in momentum and coordinate space,
$\nabla\cdot\vec{A}=0=\vec{r}\cdot\vec{A}$.

In order to obtain the cartesian multipole expansion of the magnetic
potential, and its relation to the spherical one, we could proceed as
in the case of the electric potential, with (\ref{eq:legendre1}) as a
starting point, to obtain \cite{kie98}
\begin{equation*}
  A_i(\vec{r}) = \frac{\mu_0}{4\pi} \sum_{\ell=0}^\infty
  \frac{1}{\ell!} \frac{1}{|\vec{r}\,|^{2\ell+1}}
  \widetilde{M}_{i_1\ldots i_\ell i} r_{i_1}\ldots r_{i_\ell},
\quad
\widetilde{M}_{i_1\ldots i_\ell i} = (2\ell-1)!! \int_V\!d^3r'\, 
r'_{(i_1}\ldots r'_{i_\ell)_0} j_i(\vec{r}\,').
\end{equation*}
This cartesian multipole expansion is given in terms of the partially
irreducible magnetic multipole tensor $\widetilde{M}_{i_1\ldots i_\ell
  i}$, thus, it may further unraveled into irreducible components by
means of the decomposition (\ref{eq:partir4}), (\ref{eq:partir5}) for
$\widetilde{M}$ \cite{kie98}. Equivalently, we will derive the
cartesian expansion from the spherical one (\ref{eq:mag7}) to take
advantage of the simplifications already carried out on it.  

We consider first the terms on the first line of (\ref{eq:mag7}).  The
vector spherical harmonic $Y^{\ell 1}_{\ell m}(\verr{})$, as defined
by (\ref{eq:tshy}), can be expanded in tensorial form by multiplying
(\ref{eq:partir3b}) by $\verr{i_1}\ldots\verr{i_n}$, as done below in
(\ref{eq:appD3}). Inserting that tensorial expansion in the first line
of (\ref{eq:mag7}), after some rearrangements we get,
\begin{equation}
  \label{eq:mag9}
  \begin{aligned}
A^{(1)}_j (\vec{r},t) &= \frac{\mu_0}{4\pi} \sum_{\ell=1}^\infty
\frac{1}{\ell!} \frac{1}{|\vec{r}|^{2\ell+1}} r_{k_1}\ldots r_{k_\ell}
\varepsilon_{jk_\ell h} M_{hk_1\ldots k_{\ell-1}}(t)^*,\\
M_{k_1\ldots k_{\ell}}(t) &= \sum_{m=-\ell}^\ell i
\frac{4\pi}{2\ell+1} \ell! \sqrt{\frac{\ell}{\ell+1}} \mmp{\ell}{\ell}
  {m}(t) \veps{(\ell)k_1\ldots k_{\ell}}(m).
  \end{aligned}
\end{equation}
This equality gives the cartesian multipole expansion of the first
line of (\ref{eq:mag7}). $M_{k_1\ldots k_{\ell}}(t)$ is the magnetic
$2^\ell$-multipole irreducible tensor, given in (\ref{eq:mag9}) in
terms of the spherical one $\mmp{\ell}{\ell}{m}(t)$.  Multiplication
of the second line of (\ref{eq:mag9}) by $\veps{(\ell)hk_1\cdots
  k_{\ell-1}}(m)$ yields the inverse relation:
\begin{equation}
  \label{eq:maga}
  \mmp{\ell}{\ell}{m}(t) = -i \frac{2\ell+1}{4\pi} \frac{1}{\ell!}
  \sqrt{\frac{\ell+1}{\ell}} \veps{(\ell)k_1\cdots k_{\ell}}
  M_{k_1\ldots k_{\ell}}(t).
\end{equation}
Furthermore, by replacing in the second line of (\ref{eq:mag9}) the
definition (\ref{eq:mag3}) of $\mmp{\ell}{\ell}{m}$ and using
(\ref{eq:appD3}) again, we have 
\begin{equation}
  \label{eq:magb}
M_{k_1\ldots k_{\ell}}(t) = \frac{4\pi}{2\ell+1} \ell!
\frac{\ell}{\ell+1} \sum_{m=-\ell}^\ell \veps{(\ell)k_1\cdots
  k_{\ell}}(m)^* \veps{(\ell)h'k'_1\cdots  k'_{\ell-1}}(m) 
\int_V d^3r'\left(\vj(\vec{r}\,',t)\wedge\vec{r}\,'\right)_{h'}
\vec{r}\,'_{k'_1} \cdots \vec{r}\,'_{k'_{\ell-1}}. 
\end{equation}
This equation gives the irreducible magnetic multipole tensor
$M_{k_1\ldots k_{\ell}}$, up to a normalization, as the irreducible
component of the integral on the right-hand side.

We turn next to the second line in (\ref{eq:mag7}).  Since $Y^{\ell 1}_{(\ell+1)
      m}(\verr{})$ is maximally coupled, we can use
    (\ref{eq:std.tens.s}) (or, equivalently,
    (\ref{eq:maximal.coupling})) and (\ref{eq:shyfinal}) in
    (\ref{eq:tshy}) to write
    \begin{equation}
      \label{eq:magc}
\left(Y^{\ell 1}_{(\ell+1)m}\right) = N_\ell \veps{(\ell+1)j_1\ldots
  j_\ell j}(m) \verr{j_1}\ldots\verr{j_\ell}.
    \end{equation}
With this expression, and equation (\ref{eq:multipole.final})
expressing the spherical electric multipoles $q_{\ell m}$ in terms of
the cartesian ones $Q_{i_1\ldots i_\ell}$, the second line of
(\ref{eq:mag7}) can be written as
\begin{equation}
  \label{eq:magd}
A^{(2)}_j (\vec{r},t) = \frac{\mu_0}{4\pi} \sum_{\ell=1}^\infty
\frac{1}{\ell!} \frac{1}{|\vec{r}|^{2\ell+1}} r_{k_1}\ldots r_{k_\ell}
\frac{1}{(\ell+1)(2\ell+1)} \dot{Q}_{j_1\dots j_\ell j},
\end{equation}
where we have used the fact that $Q_{j_1\dots j_\ell j}$ is a
rank-$(\ell+1)$ irreducible tensor.  From (\ref{eq:mag7}),
(\ref{eq:mag9}), (\ref{eq:magd}) we get the complete, fully reduced
magnetic cartesian multipole expansion 
\begin{equation}
  \label{eq:mage}
A_j(\vec{r},t) = \frac{\mu_0}{4\pi} \frac{1}{|\vec{r}\,|} \dot{Q}_j(t)^* +
\frac{\mu_0}{4\pi} \sum_{\ell=1}^\infty
\frac{1}{\ell!}\frac{1}{|\vec{r}\,|^{2\ell+1}} r_{k_1}\ldots
r_{k_\ell} \left(
\varepsilon_{jk_\ell h}M_{hk_1\ldots k_{\ell-1}}(t)^*
+\frac{1}{(\ell+1)(2\ell+1)}\dot{Q}_{k_1\dots k_\ell j}(t)^*
\right),
\end{equation}
where we have separated the time derivative of the electric dipole
moment which is the only $O(1/r)$ term.  The expansion (\ref{eq:mage})
has been previously given in \cite{kie98}.  Here we recover that
result and extend it by giving the relations (\ref{eq:mag9}),
(\ref{eq:maga}) between the spherical magnetic moments
(\ref{eq:mag3}), (\ref{eq:mag7}) and the cartesian ones
(\ref{eq:magb}), (\ref{eq:mage}) to all orders in the multipole
expansion. Equations (\ref{eq:mag9}), (\ref{eq:maga}) give an explicit
form for the analogous relations found in \cite{tor02}. 

\section{Final remarks}
\label{sec:finrem}

In this paper we have established in full generality the explicit form
of the unitary correspondence between spherical and cartesian
irreducible tensor operators relative to a generic angular-momentum
operator $\vec{J}$.  The matrix elements of that unitary
transformation are the standard rank-$s$ tensors $\veps{(s)}(m)$
defined in section \ref{sec:spin.wf}, which constitute an
orthonormal basis for the space of rank-$s$ irreducible tensors, and
also a standard basis of spin-$s$ wave functions.  We defined
$\veps{(s)}(m)$ in three equivalent ways, namely, recursively
(\ref{eq:std.tens.s}), explicitly (\ref{eq:std.tens.s.alt}) and
implicitly (\ref{eq:std.tens.s.impl}). The tensors $\veps{(s)}(m)$
satisfy all of the usual phase and coupling conventions required to
make them \emph{bona-fide} angular-momentum wave functions.
Furthermore, they transform under rotations equivalently as cartesian
or spherical tensors.  We remark here that an analogous basis of
cartesian irreducible spinors can be constructed along the same lines
\cite{bou1,zem65,rar41}.  Both the tensor and spinor bases are of the
(non-relativistic) Rarita-Schwinger \cite{rar41} type.  With the spin
wave functions $\veps{(s)}(m)$ as matrix elements of a unitary change of
basis, we establish the relation between cartesian and spherical
irreducible tensors of any rank in section \ref{sec:csitos}.

The unitary mapping described in \ref{sec:csitos} is important because
it allows us to apply the methods and technical tools of quantum
angular--momentum theory to tensor algebra and vice versa.  The
interrelation of tensor and angular--momentum methods is used in
section \ref{sec:we} to extend the Wigner--Eckart theorem to cartesian
irreducible tensor operators of any rank, thus determining the form of
their matrix elements.  In principle, this result can be applied to
any cartesian tensor operator, by linearly decomposing it into its
irreducible components. In section \ref{sec:totsym} we take a step
towards that goal by extending the Wigner--Eckart theorem to totally
symmetric reducible tensors, which is the main result of this
paper. Such an extension allows us to obtain matrix elements for
arbitrary tensor powers of the position or momentum operators.  This
is exploited in section \ref{sec:dershy} to give an explicit
expression for the gradients of any order of spherical harmonics.  An
additional extension of the Wigner-Eckart theorem to partially
irreducible cartesian tensor operators is given in section
\ref{sec:parirr}.  Partially irreducible tensors occur in some
applications, such as the magnetic multipole expansion discussed in
\ref{sec:mmultipole}. 

On the other hand, the results of section \ref{sec:spin.wf} and
\ref{sec:csitos} also allow us to find the cartesian form of standard
spherical tensors.  Such cartesian tensorial forms provide a
complementary approach to the usual analytic methods often based on
differential equations.  In section \ref{sec:finrot} we show that
Wigner $D$-matrices are the spherical components of three-dimensional
rotation matrices, which leads to an expression of $D^\ell$ as a
linear combination of products of $D^1$ matrices. In section
\ref{sec:standard} we obtain the cartesian tensorial form of ordinary,
bipolar and tensor spherical harmonics, and spin-polarization
operators.  We find some relations between those standard
functions that are of interest, like the decomposition of $D$-matrices
mentioned above, the binomial expansion for spherical harmonics, and
the relation between tensor and bipolar spherical harmonics, based on
the tensorial representations for them.  We discuss also the relation
between spherical and cartesian multipoles of any rank, both for the
scalar electric (section \ref{sec:emultipole}) and vector magnetic
(\ref{sec:mmultipole}) potentials. The best illustration of the
kind of relations referred to here, and of the power of our approach,
is the explicit expression for gradients of any order of spherical
harmonics found in section \ref{sec:dershy}.

\section*{Acknowledgements}

The author has been partially supported by Sistema Nacional de
Investigadores de M\'exico.

\appendix
\renewcommand{\theequation}{\thesection.\arabic{equation}}

\setcounter{equation}{0}
\section{Angular momentum: notation and conventions}
\label{sec:angmom}

In this appendix we state our conventions for quantum
angular--momentum theory
\cite{wig59,lau77,mes61,coh77,gal90,bie09,var88,con57,edm96}.
Throughout this paper we follow the convention of \cite{lau77} that
angular--momentum operators are dimensionless.  In particular, the
orbital angular--momentum operator for a single particle is
$\vec{L}=(1/\hbar) \vec{r} \wedge \vec{p} = -i \vec{r} \wedge
\vec{\nabla}$.  In order to switch to the alternate convention in
which angular--momentum operators have units of $\hbar$ it suffices to
replace $\vec{J}$ (resp.\ $\vec{L}$, $\vec{S}$) in our equations by
$\vec{J}/\hbar$ (resp.\ $\vec{L}/\hbar$, $\vec{S}/\hbar$).

Let $\vec{J}$ be a generic angular-momentum operator, $[J_h,J_i]=i
\eps_{hik}J_k$.  The simultaneous eigenstates of $\vec{J}\,^2$ and $J_3$
are denoted $|j,m\rangle$, with $j\geq 0$ and $-j\leq m\leq j$ integer
or half-odd-integer.  The state vectors $|j,m\rangle$ are assumed to be
normalized, therefore orthonormal, and their relative phases are
chosen so that they satisfy the usual convention
\begin{subequations}
\label{eq:matelmj}
\begin{equation}
  \label{eq:ladderj}
  \langle j,m'|J_{\pm} | j, m\rangle = \sqrt{(j\mp m)(j\pm m+1)}
  \delta_{m'(m\pm 1)},
\quad
  \langle j,m'|J_3 | j, m\rangle = m  \delta_{m'(m\pm 1)}.
\end{equation}
In particular, they satisfy the Condon--Shortley phase convention
$\langle j,m'|J_{\pm} | j, m\rangle\geq0$ \cite{con57,edm96,gal90}.
For the cartesian components of $\vec{J}$ we then have
\begin{equation}
  \label{eq:cartj}
  \langle j,m'|J_{\substack{1\\2}} | j, m\rangle = \frac{1}{2} \sqrt{j(j+1)-m m'}
  (\delta_{m'(m+1)}\pm\delta_{m'(m-1)}),  
\end{equation}
\end{subequations}
and the matrix element of $J_3$ as in (\ref{eq:ladderj}).  By using
the relation 
\begin{equation}
  \label{eq:cg.particular}
\CG{j}{m}{1}{\epsilon}{j}{m'} = \frac{1}{\sqrt{j(j+1)}}\times
\begin{cases}
  \frac{\displaystyle -\epsilon}{\displaystyle\sqrt{2}}\sqrt{(j-\epsilon
    m)(j+\epsilon m+1)}\,\delta_{m'(m+\epsilon)} & \mathrm{if} 
  \quad\epsilon=\pm1 \\
  m\,\delta_{mm'} & \mathrm{if}\quad\epsilon =0
\end{cases},
\end{equation}
equation (\ref{eq:ladderj}) can be written more compactly as
\begin{subequations}
\label{eq:matelmj2}
\begin{equation}
  \label{eq:spherj}
  \langle j,m'|\veps{(1)}(\epsilon)\cdot\vec{J} | j, m\rangle = 
\sqrt{j(j+1)} \CG{j}{m}{1}{\epsilon}{j}{m'},
\qquad
\epsilon=0,\pm1,
\end{equation}
with $\veps{(1)i}(\epsilon)$ defined in (\ref{eq:stdvec}), from whence we
obtain 
\begin{equation}
  \label{eq:cartj2}
  \langle j,m'|J_k | j, m\rangle = 
\sqrt{j(j+1)} \CG{j}{m}{1}{m'-m}{j}{m'}\,\veps{k}(m'-m)^*,
\end{equation}
which is equivalent to (\ref{eq:cartj}).
\end{subequations}

For integer $j$ the spatial wave--function associated with the orbital
angular--momentum eigenstates is given by the spherical harmonics, for
which we follow the modern conventions \cite{coh77,gal90,var88,jac99}
(a different phase convention for $Y_{\ell m}$ is used in older texts 
\cite{lau77,con57,edm96})
\begin{equation}
  \label{eq:shy.std}
  Y_{\ell m}(\verr{}) = \sqrt{\frac{2\ell+1}{4\pi}
    \frac{(\ell-m)!}{(\ell+m)!}} P_{\ell m}(\cos\theta) e^{im\varphi}, 
\qquad
P_{\ell m}(x) = \frac{(-1)^{\ell+m}}{\ell!2^\ell}(1-x^2)^{m/2} 
\frac{d^{\ell+m}}{dx^{\ell+m}} (1-x^2)^\ell,
\end{equation}
where $\theta$, $\varphi$ are the polar coordinates of $\verr{}$, and
where $P_{\ell m}(x)$ is an associated Legendre function of the first
kind (as defined, e.g., in \cite{gal90,jac99}) that for $m=0$ reduces
to a Legendre polynomial $P_\ell(x)$.  An explicit expression for
$P_{\ell m}(x)$ is given in section \ref{sec:xplct}.

We turn next to the coupling of angular momenta.  Consider two
subsystems with angular-momentum operators $\vec{J}_a$, $a=1$, 2, each
one acting on the Hilbert space of states $\H_a$ of its subsystem.
Let $\H=\H_1\otimes\H_2$ be the state space of the total system, on
which the total angular-momentum operator $\vec{J}=\vec{J}_1\otimes
I_2+I_1\otimes\vec{J}_2$ acts, where $I_a$ is the identity operator on
$\H_a$.  The definition of $\vec{J}$ is usually written in the
abbreviated form $\vec{J}=\vec{J}_1+\vec{J}_2$.  If each subsystem $a$
is in a state $|j_a,\mu_a\rangle$, simultaneous eigenstate of
$(\vec{J}_a)^2$ and $(J_a)_3$, the state of the total system is the
tensor product state $|j_1,\mu_1;j_2,\mu_2 \rangle \equiv |j_1,
\mu_1\rangle \otimes |j_2, \mu_2\rangle$.  On the other hand, the
states with definite total angular momentum $|j_1, j_2, j, m \rangle$
are chosen to be simultaneous eigenstates of $(\vec{J}_1)^2$,
$(\vec{J}_2)^2$, $\vec{J}\,^2$ and $\vec{J}_3$.  The unitary
transformation relating the two orthonormal bases of $\H$ is given by
the CG coefficients, which we denote as
\begin{equation}
  \label{eq:CG.coupling}
|j_1, j_2, j, m \rangle = \sum_{\mu_1,\mu_2}
\CG{j_1}{\mu_1}{j_2}{\mu_2}{j}{m} |j_1,\mu_1;j_2,\mu_2 \rangle.
\end{equation}
A global complex factor in the CG coefficients is
determined by the normalization condition and the Condon--Shortley
\cite{con57} phase convention $\CG{j_1}{j_1}{j_2}{j-j_1}{j}{j}\geq 0$,
which implies in particular that all CG coefficients are
real.   
Since $|j_1, j_2,
j, m \rangle$ is by definition an eigenstate of $\vec{J}\,^2$ and
$\vec{J}_3$, the matrix elements of $\vec{J}$ between those states
do not depend on $j_1$, $j_2$, 
\begin{equation}
  \label{eq:prop.1}
\langle j_1, j_2,j', m'| \vec{J}\,| j_1, j_2,j, m \rangle=
\langle j', m'| \vec{J}\,| j, m \rangle=
\langle j, m' | \vec{J}\,| j, m \rangle \delta_{j'j},
\end{equation}
where the last matrix element is the standard one as given in
(\ref{eq:cartj}) or (\ref{eq:cartj2}).  

\paragraph*{Proof of (\ref{eq:spin.matrix.element})}

To prove (\ref{eq:spin.matrix.element}) we proceed by induction on
$n$.  The case $n=1$ is (\ref{eq:spin.mat.elm.1}), which is seen to be
true by exhaustive evaluation of all possible cases
$m',m=0,\pm1$. Assuming (\ref{eq:spin.matrix.element}) to be true, we
have to prove that
\begin{equation}
\label{eq:prf.1}
  \langle n+1,m' |S_k| n+1,m \rangle = \veps{i_1\ldots i_{n+1}}(m')^*
 (S_{(n+1)k})_{i_1\ldots i_{n+1};j_1\ldots j_{n+1}} \veps{j_1\ldots
   j_{n+1}}(m),
\end{equation}
with $-n-1\leq m',m \leq n+1$. From (\ref{eq:s.mat.prop.a}), the
matrix $\vec{S}_{(n+1)}$ can be written as
\begin{equation*}
(S_{(n+1)k})_{i_1\ldots i_{n+1};j_1\ldots j_{n+1}} =
(S_{(n)k})_{i_1\ldots i_{n};j_1\ldots j_{n}}  \delta_{i_{n+1}j_{n+1}} + 
\delta_{i_1j_1}\ldots \delta_{i_nj_n} (S_{(1)k})_{i_{n+1};j_{n+1}}.
\end{equation*}
Substituting this equality together with (\ref{eq:std.tens.s}) into
the right-hand side of (\ref{eq:prf.1}) leads to
  \begin{align*}
&\veps{i_1\ldots i_{n+1}}(m')^*
 (S_{(n+1)k})_{i_1\ldots i_{n+1};j_1\ldots j_{n+1}} \veps{j_1\ldots
   j_{n+1}}(m) =
\sum_{\mu',\mu=-n}^n \sum_{\nu',\nu=-1}^1
\CG{n}{\mu'}{1}{\nu'}{n+1}{m'} \CG{n}{\mu}{1}{\nu}{n+1}{m}\\
&\times \left(
\veps{i_1\ldots i_{n}}(\mu')^*(S_{(n)k})_{i_1\ldots i_{n};j_1\ldots
  j_{n}} \veps{j_1\ldots j_{n}}(\mu) \delta_{\nu'\nu} +
\delta_{\mu'\mu} \veps{i_{n+1}}(\nu')^*(S_{(1)k})_{i_{n+1};j_{n+1}}
\veps{j_{n+1}}(\nu) 
\rule{0pt}{14pt}\right)\\
&=
\sum_{\mu',\mu=-n}^n \sum_{\nu',\nu=-1}^1
\CG{n}{\mu'}{1}{\nu'}{n+1}{m'} \CG{n}{\mu}{1}{\nu}{n+1}{m}
\left(\rule{0pt}{14pt}
\langle n,\mu'|(S_A)_k|n,\mu \rangle \delta_{\nu'\nu} +
\delta_{\mu'\mu }\langle 1,\nu'|(S_B)_k|1,\nu \rangle
\rule{0pt}{14pt}\right),\\
\intertext{where in the last equality we used the case $n=1$ given by
  (\ref{eq:spin.mat.elm.1}) and the inductive hypothesis
  (\ref{eq:spin.matrix.element}) to express the matrix elements of
  $S_{(n)k}$ and $S_{(1)k}$ on the second line in terms of the matrix
  elements of a spin operator $\vec{S}_A$ for a spin-$n$ system $A$
  and a spin operator $\vec{S}_B$ for a spin-$1$ system $B$,
  respectively.  $\vec{S}_{A,B}$ being both angular-momentum
  operators, their matrix elements are given by (\ref{eq:cartj}). The
  total angular-momentum operator for the total system $A+B$ is
  $\vec{S}=\vec{S}_A\otimes I_B + I_A \otimes \vec{S}_B$, and we have}
&= \sum_{\mu',\mu=-n}^n \sum_{\nu',\nu=-1}^1
\CG{n}{\mu'}{1}{\nu'}{n+1}{m'} \CG{n}{\mu}{1}{\nu}{n+1}{m}
\langle n,\mu'; 1,\nu'|S_k|n,\mu, 1,\nu \rangle\\
&=
\langle n,1,n+1,m'|S_k|n,1,n+1,m \rangle
=\langle n+1,m'|S_k|n+1,m \rangle,
\hspace{77.9ex}\qed
  \end{align*}
where the last equality is (\ref{eq:prop.1}).

\setcounter{equation}{0}
\section{Non--maximally--coupled bipolar spherical harmonics} 
\label{sec:non-maximal.tshy}

The bipolar spherical harmonics \cite{var88} defined in
(\ref{eq:bipolar}) have been given an explicit tensorial expression in
\cite{bou2}.  Here we quote that expression for
$Y^{\ell\ell'}_{jm}(\verr{},\ver{r}\,')$ with a slightly different
notation that is more appropriate to the purposes of this paper. We
assume $\ell'\geq\ell$ for notational simplicity. The case
$\ell'<\ell$ results from the relation
$Y^{\ell\ell'}_{jm}(\verr{},\ver{r}\,')=(-1)^{\ell'+\ell-j}
Y^{\ell'\ell}_{jm}(\ver{r}\,',\verr{})$ that follows from the
definition (\ref{eq:bipolar}).  Thus, for $\ell'\geq\ell$ from
\cite{bou2} we find
\begin{subequations}
\label{eq:bipolar.gen.not}
\begin{equation}
  \label{eq:bipolar.general}
  \begin{aligned}
&Y^{\ell\ell'}_{jm}(\verr{},\ver{r}\,') =
\frac{i^\nu}{4\pi}\sqrt{2^j\nu!} \sqrt{\binom{2j}{j+n}} 
\sqrt{\frac{(2j+1)(2\ell+1)(2\ell'+1)}{(\ell'+\ell+j+1)!}}
\underset{\scriptstyle k_1+k_2=n}
{\displaystyle\sum_{k_1=k_{1\mathrm{min}}}^n 
\sum_{k_2=0}^{k_{2\mathrm{max}}}} (-1)^{k_2} \binom{n}{k_1}
\sum_{q=q_\mathrm{min}}^{q_\mathrm{max}}\\
&\times  (2q-1)!!
\binom{t}{2q} \veps{i_1\ldots i_{k_2+q}j_1\ldots j_{k_1+q}h_1\ldots
h_{t-2q}}(m) \verr{i_1}\ldots \verr{i_{k_2+q}} \ver{r}\,'_{j_1}
\ldots \ver{r}\,'_{j_{k_1+q}} v_{h_1}\ldots v_{h_{t-2q}}
P^{(j-q)}_{\ell+k_1}(\verr{}\cdot\ver{r}\,'),
  \end{aligned}
\end{equation}
where the following notations have been used
\begin{equation}
  \label{eq:bipolar.notation}
  \begin{gathered}
\nu=\ell'+\ell-j,
\quad
n=\ell'-\ell,
\quad
t=j-n,
\quad
\vec{v}=\verr{}\wedge\ver{r}\,',  
\quad
q_\mathrm{min} = \max\{0,\ell+k_2-\nu\},
\\
k_{1\mathrm{min}} =
\begin{cases}
  \max\{0,n-\frac{\nu}{2}\}\\
  \max\{0,n-\frac{\nu-1}{2}\}
\end{cases},
\quad
k_{2\mathrm{max}} =
\begin{cases}
  \min\{\frac{\nu}{2},n\}\\
  \min\{\frac{\nu-1}{2},n\}
\end{cases},
\quad
q_{\mathrm{max}} =
\begin{cases}
  \ell-\frac{\nu}{2} & \text{$\nu$ even}\\
  \ell-\frac{\nu-1}{2}-1 & \text{$\nu$ odd}
\end{cases},
  \end{gathered}
\end{equation}
\end{subequations}
and where $P^{(k)}_n(x)$ is the $k^\mathrm{th}$ derivative of the
Legendre polynomial $P_n(x)$.  In (\ref{eq:bipolar.general}) it is
understood that the factor $\verr{i_1}\ldots \verr{i_{k_2+q}}$ is to
be replaced by 1 if $k_2+q<1$, and analogously the other products of
$\ver{r}\,'$ and $\vec{v}$.  It is not difficult to check that
(\ref{eq:bipolar.gen.not}) reduces to (\ref{eq:max.bshy}) in the case
of maximal coupling $\nu=0$.  In the case $\nu>0$ of non-maximally
coupled $Y^{\ell\ell'}_{jm}$, (\ref{eq:bipolar.gen.not}) gives its
dual cartesian irreducible tensor just as (\ref{eq:max.bshy}) does for
maximally coupled ones.

\setcounter{equation}{0}
\section{Proof of (\ref{eq:partir3})} 
\label{sec:appD}

To derive (\ref{eq:partir3b}) we start from the second equality in
(\ref{eq:nder.8}) written in the form
\begin{equation}
  \label{eq:appD1}
  (Y^{\ell 1}_{\ell m}(\verr{}))_j = -\frac{i}{\sqrt{\ell(\ell+1)}}
  N_\ell \varepsilon_{jab} r_a \partial_b\left(
    \veps{(\ell)k_1\ldots k_\ell}(m) \verr{k_1}\cdots \verr{k_\ell}
\right),
\end{equation}
where we have used (\ref{eq:shyfinal}). By multiplying both sides of
(\ref{eq:appD1}) by $r^\ell$, substituting in it definition
(\ref{eq:tshy}) for $Y^{\ell 1}_{\ell m}$ and using the relation
$\varepsilon_{jab} r_a r^\ell \partial_b(f(\vec{r})) =
\varepsilon_{jab} r_a \partial_b(r^\ell f(\vec{r}))$ on its right-hand
side, we get
\begin{equation}
  \label{eq:appD2}
  \begin{aligned}
\sum_{\mu,\nu}\CG{\ell}{\mu}{1}{\nu}{\ell}{m}\veps{(\ell)i_1\ldots
  i_\ell}(\mu) \veps{(1)j}(\nu) r_{i_1}\ldots r_{i_\ell} &=
-\frac{i}{\sqrt{\ell(\ell+1)}} \varepsilon_{jab}r_a
\veps{(\ell)k_1\ldots k_{\ell}}(m) \partial_b(r_{k_1}\ldots r_{k_\ell})
\\
&=-i \sqrt{\frac{\ell}{\ell+1}}\varepsilon_{jab}r_a
\veps{(\ell)bk_1\ldots k_{\ell-1}}(m)r_{k_1}\ldots r_{k_{\ell-1}},
  \end{aligned}
\end{equation}
where in the second equality we made use of the total symmetry of
$\veps{(\ell)}$. Division of both sides of (\ref{eq:appD2}) by
$r^\ell$ yields its useful form
\begin{equation}
  \label{eq:appD3}
(Y^{\ell 1}_{\ell m}(\verr{}))_j =   -i
\sqrt{\frac{\ell}{\ell+1}}\varepsilon_{jih} \verr{i} \veps{(\ell)hj_1\ldots
  j_{\ell-1}}(m)\verr{j_1}\ldots \verr{j_{\ell-1}}. 
\end{equation}
Differentiation of both sides of (\ref{eq:appD2}) with
$\partial_{j_1}\ldots\partial_{j_\ell}$ leads to
\begin{equation}
  \label{eq:appD4}
\ell!\sum_{\mu,\nu}\CG{\ell}{\mu}{1}{\nu}{\ell}{m}\veps{(\ell)i_1\ldots 
  i_\ell}(\mu) \veps{(1)j}(\nu)  =
-i \sqrt{\frac{\ell}{\ell+1}} (\ell-1)! \sum_{h=1}^\ell
\varepsilon_{ji_hb}
\veps{(\ell)bi_1\ldots\widehat{i}_h\ldots i_{\ell}}(m),
\end{equation}
which is (\ref{eq:partir3b}).  We see that (\ref{eq:nder.8}),
(\ref{eq:partir3b}) and (\ref{eq:appD1})--(\ref{eq:appD3}) are
different ways of rewriting the same equality.

The derivation of (\ref{eq:partir3c}) starts with the third equality
in (\ref{eq:nder.8}) written in the form
\begin{equation}
  \label{eq:appD5}
  r^n (Y^{n 1}_{(n-1) m}(\verr{}))_i = \frac{1}{\sqrt{n(2n-1)}} \left(
\partial_i(r^{n+1}Y_{(n-1)m}(\verr{})) - (2n+1) r^{n-1}
Y_{(n-1)m}(\verr{}) r_i
\right).
\end{equation}
Differentiation of both sides of this equality with 
$\partial_{j_1}\ldots\partial_{j_n}$ and use of (\ref{eq:tshy}) and
(\ref{eq:tot.sym.alt.def}) leads to
\begin{equation}
  \label{eq:appD6}
  \begin{aligned}
  \sum_{\mu,\nu} \CG{n}{\mu}{1}{\nu}{n-1}{m} \veps{\{n,n\}j_1\ldots
    j_n}(\mu) \veps{i}(\nu) &= 
\frac{1}{\sqrt{n(2n-1)}}
\left(
\frac{\lambda'_{(n,n)}}{\lambda'_{(n+1,n-1)}}
\veps{\{n+1,n-1\}j_1\ldots j_ni}(m)\right.\\
&\quad\left.-(2n+1)\frac{\lambda'_{(n,n)}}{\lambda'_{(n-1,n-1)}}
\sum_{k=1}^n \veps{\{n-1,n-1\}j_1\ldots \widehat{j}_k\ldots j_n}(m) \delta_{j_ki}
\right),
  \end{aligned}
\end{equation}
where, as above, a caret over a subindex indicates that it is
omitted.  Substitution of (\ref{eq:tot.sym.def}) in (\ref{eq:appD6})
finally leads to
\begin{equation}
  \label{eq:appD7}
  \begin{aligned}
\lefteqn{
  \sum_{\mu,\nu} \CG{n}{\mu}{1}{\nu}{n-1}{m} \veps{(n)j_1\ldots
    j_n}(\mu) \veps{i}(\nu) = } \hspace{1.25ex}   \\
&=
\frac{1}{\sqrt{n(2n-1)}}\left(
\frac{\lambda'_{(n,n)}}{\lambda'_{(n+1,n-1)}}\lambda_{(n+1,n-1)}\frac{1}{(n+1)!} 
\veps{(n-1)\{j_1\ldots j_{n-1}}(m)\delta_{j_ni\}}
-(2n+1) \frac{\lambda'_{(n,n)}}{\lambda'_{(n-1,n-1)}}\right.\\
&\quad\left.\times
\sum_{k=1}^n \veps{(n-1)j_1\ldots \widehat{j}_k\ldots j_n}(m)
\delta_{j_ki} \right)\\
&=
\frac{1}{n\sqrt{(2n-1)(2n+1)}}\left(
\frac{1}{(n-1)!} 
\veps{(n-1)\{j_1\ldots j_{n-1}}(m)\delta_{j_ni\}}
-(2n+1) 
\sum_{k=1}^n \veps{(n-1)j_1\ldots \widehat{j}_k\ldots j_n}(m)
\delta_{j_ki} \right)\\
&=
\frac{1}{n\sqrt{(2n-1)(2n+1)}}\left( 2
\sum_{1\leq k<h\leq n}\veps{(n-1)j_1\ldots \widehat{j}_k
\ldots \widehat{j}_h\ldots j_{n}i}(m)\delta_{j_kj_h}
-(2n-1) 
\sum_{k=1}^n \veps{(n-1)j_1\ldots \widehat{j}_k\ldots j_n}(m)
\delta_{j_ki} \right),
  \end{aligned}
\end{equation}
the last equality being (\ref{eq:partir3c}).

\end{document}